\title{Enumeration on Trees under Relabelings\footnote{This
is the complete version with proofs of the corresponding ICDT'18
publication~\cite{amarilli2018enumeration}.}}
\author[1]{Antoine Amarilli}
\author[2]{Pierre Bourhis}
\author[3]{Stefan Mengel}
\affil[1]{LTCI, Télécom ParisTech, Université Paris-Saclay; Paris, France}
\affil[2]{CRIStAL, CNRS UMR 9189 \& Inria Lille; Lille, France}
\affil[3]{CNRS, CRIL UMR 8188; Lens, France}
\subjclass{H.2 DATABASE MANAGEMENT}
\keywords{enumeration; trees; updates; MSO; circuits; knowledge compilation}
\theoremstyle{theorem}
\newcommand*{\defeq}{\mathrel{\rlap{%
  \raisebox{0.3ex}{$\m@th\cdot$}}%
  \raisebox{-0.3ex}{$\m@th\cdot$}}%
  =}
\newcommand{\card}[1]{\left|{#1}\right|}
\newcommand{\NN}{\mathbb{N}}
\renewcommand{\phi}{\varphi}
\newcommand{\calA}{\mathcal{A}}
\newcommand{\h}{\mathrm{h}}
\newcommand{\first}{\mathsf{first}}
\newcommand{\last}{\mathsf{last}}
\newcommand{\pnext}{\mathsf{next}}
\newcommand{\var}{\mathrm{var}}
\newcommand{\bvar}{\mathrm{bvar}}
\newcommand{\avar}{\mathrm{svar}}
\newcommand{\reach}{\mathrm{reach}}
\renewcommand{\r}{\mathrm{r}}
\newcommand{\anc}{\mathcal{A}}
\newcommand{\enu}{\mathsf{\vphantom{pf}enu}}
\newcommand{\fix}{\mathsf{\vphantom{pf}fix}}
\newcommand{\upd}{\mathsf{\vphantom{pf}upd}}
\newcommand{\inp}{\mathrm{inp}}
\newcommand{\aplus}{\cup}
\newcommand{\atimes}{\times}
\newcommand{\snull}{\mathsf{null}}
\newcommand{\relprod}{\times_{\mathrm{rel}}}
\newcommand{\opath}{\rightarrow^*_\ctimes}
\newcommand{\Sigmafix}{\Sigma_{\mathsf{e,u,f}}}
\newcommand{\node}{\mathsf{node}}
\newcommand{\nfix}{\mathsf{nfix}}
\newcommand{\dom}{\mathsf{dom}}
\newcommand{\gset}[1]{\mathrm{S}(#1)}
\newcommand{\gsetv}[2]{\mathrm{S}_{#1}(#2)}
\newcommand{\bval}[2]{\mathrm{V}_{#1}(#2)}
\newcommand{\sgb}[1]{$\langle B\!\!:\!\!#1\rangle$}
\newcommand{\sgx}[1]{$\langle x\!\!:\!#1\rangle$}
\newcommand{\splus}{\oplus}
\newcommand{\stimes}{\otimes}
\newcommand{\ctimes}{\boxtimes}
\begin{document}

\maketitle
\begin{abstract}
  We study how to evaluate MSO queries with free variables on trees, within the
framework of enumeration algorithms. Previous work has shown how to enumerate
answers with linear-time preprocessing and delay linear in the size of each
output, i.e., constant-delay for free first-order variables. We extend this
result to support \emph{relabelings}, a restricted kind of update operations on
trees which allows us to change the node labels. Our main result shows that we
can enumerate the answers of MSO queries on trees with linear-time preprocessing
and delay linear in each answer, while supporting node relabelings in logarithmic time. To
prove this, we reuse the circuit-based enumeration structure from our earlier
work, and develop techniques to maintain its index under node relabelings. We
also show how enumeration under relabelings can be applied to evaluate practical
query languages, such as aggregate, group-by, and parameterized queries.

\end{abstract}

\section{Introduction}
\label{sec:introduction}
Enumeration algorithms are 
a common way to compute large query results on databases,
see, e.g.,~\cite{Segoufin14}. Instead of computing all
results, these algorithms 
compute results one after the other, while ensuring
that the time between two
successive results
(the \emph{delay}) remains small. Ideally, the delay
should be \emph{linear} in the size of each produced solution, and 
independent
of the size of the input database.
To make this possible, 
enumeration algorithms can
build an index structure on the database during a
\emph{preprocessing 
phase} that 
ideally runs in linear time.

Most enumeration algorithms assume that the input database
will not change.
If we \emph{update} the database,
we must re-run the
preprocessing phase from scratch,
which is unreasonable in practice.
Losemann and Martens~\cite{losemann2014mso}
proposed the first enumeration algorithm that 
addresses this issue:
they study monadic second-order (MSO) query evaluation on trees,
and show that the
index structure for enumeration can be maintained under updates. More precisely,
they can update the index in time polylogarithmic in 
the input tree~$T$ (much better than re-running the linear preprocessing).
The tradeoff is that their delay is also polylogarithmic in~$T$,
whereas the delay can be independent of~$T$ when there are no
updates~\cite{bagan2006mso}.

This result of~\cite{losemann2014mso} leads to a natural question: does the support
for updates inherently increase the delay of enumeration algorithms?
This is not always the case: e.g., when evaluating
first-order queries (plus modulo-counting quantifiers) on bounded-degree
databases, updates can be applied in constant time~\cite{berkholz2017answering2} 
and the delay is constant, as in the case without updates~\cite{durand2007first,kazana2011first}.
However, when evaluating conjunctive queries (CQs) on arbitrary
databases, supporting updates has a cost:
under complexity-theoretic assumptions,
the class of CQs with efficient enumeration
under updates~\cite{berkholz2017answering}
is a strict subclass 
of the class of CQs for the case without updates~\cite{bagan2007acyclic}.
Could the same be true of MSO on trees, as \cite{losemann2014mso} would
suggest?

In this work, we answer this question in the negative, for a restricted update
language. Specifically, 
we show an enumeration algorithm for MSO on trees with the same delay
as in the case without updates~\cite{bagan2006mso}, while supporting updates with a better complexity
than~\cite{losemann2014mso} (see detailed comparison of results in Section~\ref{sec:problem}).
The tradeoff is that we only allow updates that change the labels of
nodes, called \emph{relabelings},
unlike~\cite{losemann2014mso} where updates can also insert and delete leaves.
We still show how these relabelings are useful to evaluate
practical query languages, such as
\emph{parameterized} queries and \emph{group-by queries with aggregates}.
A \emph{parameterized} query allows the user to specify some parameters for
the evaluation (e.g., select some positions on the tree). 
Our results support such queries:
we can model the parameters as labels and apply relabeling updates when the
user changes the parameters. 
A \emph{group-by query with aggregates} 
partitions the set of results into groups based on an attribute,
and 
computes some aggregate quantity on each group (e.g., a sum). 
We show how to 
enumerate the results
of such queries.
For groups, our techniques can handle them with one single enumeration structure
using relabelings to switch groups.
For aggregates, we can efficiently compute and maintain them
in arbitrary semirings;
this problem was left open by~\cite{losemann2014mso} even for counting,
and is practically relevant in its own right~\cite{nikoli2017incremental}. Of
course, by Courcelle's theorem~\cite{courcelle1990monadic1}, our
results generalize to MSO queries on bounded-treewidth data
(see~\cite{amarilli2017circuit_extended}), where relabelings mean adding or
removing unary facts (i.e., the tree decomposition is unchanged).

The proof of our main result follows the approach of~\cite{amarilli2017circuit} 
and is inspired by knowledge compilation in artificial intelligence
and by factorized representations in database theory. Specifically, we
encode knowledge (in our case, the query result) as a circuit in a restricted class,
and we then use the circuit for efficient reasoning and for aggregates as in~\cite{deutch2014circuits}.
In~\cite{amarilli2017circuit}, 
we have used this circuit-based approach to recapture existing enumeration results for MSO on
trees \cite{bagan2006mso,kazana2013enumeration}.
In this work, we refine the approach and show
that it can support
updates.
Our key new ingredient are \emph{hybrid circuits}:
they have both \emph{set-valued}
gates that represent the values to enumerate, and
\emph{Boolean} gates that encode the tree labels which can be updated.
We first show that we can efficiently compute such circuits to capture the possible
results of an MSO query under all possible labelings of a tree.
Second, we show how to efficiently enumerate the set of assignments captured by
these circuits, also supporting updates that toggle the Boolean gates
affected by a relabeling. We also introduce some standalone tools, e.g., 
a lemma to \emph{balance} the input trees to MSO queries (Lemma~\ref{lem:balancing}),
ensuring that hybrid circuits have logarithmic depth so that changes can be
propagated quickly; and a constant-delay enumeration algorithm for
reachability in forests under updates
(Section~\ref{sec:reachability}).

\subparagraph*{Paper structure.}
We start with preliminaries in Section~\ref{sec:prelim}, and define our
problem and give our main result in Section~\ref{sec:problem}.
In Section~\ref{sec:provenance}, we review the set-valued provenance circuits
of~\cite{amarilli2017circuit}, and show our balancing lemma.
We introduce hybrid circuits in
Section~\ref{sec:hybrid}, and
show in
Section~\ref{sec:enumeration}
how to use them for enumeration under updates,
 using a standalone reachability indexing scheme on forests
given in Section~\ref{sec:reachability}. Having shown our main result,
we outline its consequences for application-oriented query languages in
Section~\ref{sec:applications} and conclude in Section~\ref{sec:conclusion}.

\section{Preliminaries}
\label{sec:prelim}
\subparagraph*{Trees, queries, answers, assignments.}
In this work, unless otherwise specified, a \emph{tree} is always 
binary, rooted, ordered, and full.
Let $\Gamma$ be a finite set called a \emph{tree alphabet}.
A \emph{$\Gamma$-tree} $(T, \lambda)$ is a pair of a tree $T$
and of a \emph{labeling function} $\lambda$ that maps each
node $n$ of~$T$ to a \emph{set of labels} $\lambda(n) \subseteq \Gamma$.
We often abuse notation
and identify $T$ to its node set, e.g., write $\lambda$ as a function
from~$T$ to the powerset~$2^{\Gamma}$ of~$\Gamma$; we may
also omit~$\lambda$ and write
the $\Gamma$-tree as just~$T$.

We consider queries in \emph{monadic second-order
logic} (MSO)
on the \emph{signature} of $\Gamma$-trees: it features two binary relations $E_1$
and $E_2$ denoting the first and second child of each internal node, and a
unary relation $P_l$ for each $l \in \Gamma$ denoting the nodes 
that carry label~$l$ (i.e., nodes~$n$ for which $l \in \lambda(n)$).
MSO extends \emph{first-order
logic}, which builds formulas from atoms of this signature and from equality atoms,
using the Boolean connectives and existential and universal quantification over
nodes. Formulas in MSO can also use second-order quantification over sets of nodes,
written as second-order variables.
For instance, on $\Gamma = \{l_1, l_2, l_3\}$, we can express in MSO
that every node carrying labels $l_1$ and $l_2$
has a descendant carrying label~$l_3$.

In this work, we 
study MSO \emph{queries}, i.e., MSO formulas with free variables. The free variables can be first-order or
second-order, but we can
rewrite any MSO query $Q(\mathbf{x}, \mathbf{Y})$ to ensure that all free variables are
second-order: for instance as $Q'(\mathbf{X}, \mathbf{Y}):
\exists \mathbf{x} ~ 
\bigwedge_i \textrm{Sing}(X_i, x_i)
\land Q(\mathbf{x}, \mathbf{Y})$, where $\textrm{Sing}(X, x)$ asserts
that $X$ is exactly the singleton set $\{x\}$.
Hence, we usually assume without loss of generality that MSO
queries only have second-order free variables.

Given a $\Gamma$-tree $T$ and an MSO query $Q(X_1, \ldots, X_m)$, an
$m$-tuple $\mathbf{B} = B_1, \ldots, B_m$ of subsets of~$T$ is an \emph{answer} of~$Q$ on~$T$,
written $T \models Q(\mathbf{B})$, if $T$ satisfies $Q(\mathbf{B})$ in the usual
logical sense. It will be more convenient to represent each answer as an
\emph{assignment}, which is a set of pairs called \emph{singletons} that
indicate that an element is in the interpretation of a variable. Formally, given
an $m$-tuple $\mathbf{B}$ of subsets of~$T$, the corresponding assignment is
$\{\langle X_i: n \rangle
\mid 1 \leq i \leq m \text{~and~} n \in B_i\}$. We can convert each assignment in linear
time to the corresponding answer and vice-versa, so we will use the
assignment representation throughout this work.
Our goal is to compute the set of assignments of~$Q$ on~$T$, which we call
the \emph{output} of~$Q$ on~$T$;
we abuse notation and write it $Q(T)$.
We measure the complexity of this task in \emph{data
complexity}, i.e., as a function of the input tree~$T$, with the query~$Q$ being
fixed.

\subparagraph*{Enumeration.}
The output of an MSO query can be huge, so we work in the setting of \emph{enumeration
algorithms} \cite{Wasa16,Segoufin14}
which we present following~\cite{amarilli2017circuit}.
As usual for enumeration algorithms~\cite{Segoufin14},
we work
in the RAM model with uniform cost measure (see,
e.g.,~\cite{AhoHU74}), 
where pointers, numbers, labels for elements and facts, etc.,
have constant size.

An \emph{enumeration algorithm with linear-time preprocessing} for a fixed MSO query
$Q(\mathbf{X})$ on~$\Gamma$-trees takes as input a $\Gamma$-tree $T$ and computes the output
$Q(T)$ of~$Q$ on~$T$.
It consists of two
phases.
First, the \emph{preprocessing phase} 
takes $T$ as input
and produces in \emph{linear time} a
data structure~$J$ called the \emph{index}, and an initial \emph{state}
$s$. 
Second, 
the \emph{enumeration phase}
repeatedly calls an algorithm $\calA$. Each call to~$\calA$ takes as input the
index $J$ and the current state $s$,
and returns one assignment and a new state $s'$: a special state value indicates that
the enumeration is over so $\calA$ should not be called again.
The assignments produced by the successive calls to~$\calA$
must be exactly the elements of $Q(T)$, with no
duplicates.

We say that the enumeration
algorithm has \emph{linear delay} if the
time to produce each new assignment $A$ is linear in its cardinality
$\card{A}$,
and is independent of~$T$.
In particular, if all answers to~$Q$ are tuples of singleton sets (for instance, if
$Q$ is the translation of a MSO query where all free variables are first-order),
then the cardinality of each assignment is constant (it is the arity of~$Q$).
In this case,
the enumeration algorithm 
must produce each assignment with constant delay: this is called \emph{constant-delay
enumeration}.
The \emph{memory usage} of an enumeration algorithm is the maximum number of
memory cells used during the 
enumeration phase (not counting the index $J$, which resides in
read-only memory), expressed as a function of the size of the largest assignment
(as in~\cite{bagan2006mso}): we say that the enumeration algorithm has
\emph{linear memory} if its memory usage is linear in the size of the largest
assignment.

Previous works have studied enumeration for MSO on
trees.
Bagan~\cite{bagan2006mso} showed that for any fixed MSO query $Q(\mathbf{X})$,
given a $\Gamma$-tree $T$,
we can enumerate the output of~$Q$ on~$T$
with linear delay and memory, i.e.,
constant delay and memory when all free variables are first-order.
This result was re-proven by Kazana and Segoufin~\cite{kazana2013enumeration}
via a result of Colcombet~\cite{colcombet2007combinatorial}, and a third proof
via provenance circuits was recently proposed by the present
authors~\cite{amarilli2017circuit}.

\section{Problem Statement and Main Result}
\label{sec:problem}
Our goal is to address a limitation of these existing results,
namely, the assumption
that the input $\Gamma$-tree $T$ will never change. Indeed, if $T$ is updated,
these results must discard the index~$J$ and re-run the preprocessing phase on the new tree. 
To improve on this, we want our enumeration algorithm to support \emph{update
operations} on~$T$, and to update~$J$ accordingly
instead of recomputing it from scratch.
Specifically, an algorithm for \emph{enumeration under updates}
on a
tree $T$ has a
preprocessing phase that produces the index $J$ as usual, but has two algorithms
during the enumeration phase: (i.) an enumeration algorithm~$\mathcal{A}$ as
presented before, and (ii.) an \emph{update algorithm}~$\mathcal{U}$. 
When we want to change the tree~$T$, we call $\mathcal{U}$ with a
description of the changes:
$\mathcal{U}$ modifies~$T$ accordingly,
updates the index~$J$, and resets
the enumeration state (so enumeration starts over on the new tree, and all
working memory of the enumeration phase is freed). The
\emph{update time}
of the enumeration algorithm is the 
complexity of~$\mathcal{U}$: like preprocessing, but unlike delay,
it is a function of the size of the (current) tree~$T$.

To our knowledge, the only published result on enumeration for MSO queries under
updates is the work of
Losemann and Martens
\cite{losemann2014mso}, which applies to words and to trees, for MSO queries with only
free first-order variables. They show 
an enumeration algorithm
with linear-time preprocessing: on words,
the update complexity and delay is $O(\log \card{T})$;
on trees, these complexities become
$O(\log^2 \card{T})$.
Thus the delay is worse than in the case without updates~\cite{bagan2006mso},
and in particular it is no longer independent from~$T$.

\subparagraph*{Main result.}
In this work, we show that enumeration under updates for MSO queries on trees can
be performed with a better complexity that matches the case without updates:
linear-time preprocessing, linear delay and memory (in the assignments),
and update time in $O(\log \card{T})$. This improves on the
bounds of~\cite{losemann2014mso} (and uses entirely different techniques).
However, in exchange for the better complexity, we only support a weaker update
language: we can change the labels of tree nodes, called a
\emph{relabeling}, but we cannot insert or delete leaf nodes
as in~\cite{losemann2014mso}, which we leave
for future work (see the conclusion
in Section~\ref{sec:conclusion}).
We show in Section~\ref{sec:applications} that
relabelings are still useful to derive results for some practical query
languages. 

Formally, a relabeling on a $\Gamma$-tree~$T$ is a pair of a node $n\in T$ and a label
$l\in \Gamma$. To apply it, we change the label $\lambda(n)$
of~$n$ by adding $l$ if $l \notin \lambda(n)$, and removing it if $l \in
\lambda(n)$. In other words, the tree~$T$ never changes, and updates only
modify~$\lambda$.
Our main result is then:

\begin{theoremrep}
  \label{thm:main}
  For any fixed tree alphabet $\Gamma$ and MSO query $Q(\mathbf{X})$ on
  $\Gamma$-trees, given a $\Gamma$-tree $T$, we can enumerate the output $Q(T)$
  of~$Q$ on~$T$ with linear-time preprocessing, linear delay and memory, and
  logarithmic update time for relabelings.
\end{theoremrep}

\begin{proof}
  See Appendix~\ref{apx:together} for the proof of this result.
\end{proof}

\noindent In other words, after preprocessing $T$ in time~$O(\card{T})$ to compute the
index~$J$, we can:
\begin{itemize}
  \item Enumerate the assignments of~$Q$ on~$T$, using~$J$,
    with delay linear in the size of each assignment, so
    constant if the assignments to $Q$ have constant size.
  \item Toggle a label of a node of~$T$,
    update~$J$, and reset
    the enumeration, in time $O(\log \card{T})$.
\end{itemize}
We show this result in Sections~\ref{sec:provenance}--\ref{sec:reachability},
and then give consequences of this result in
Section~\ref{sec:applications}.

\section{Provenance Circuits}
\label{sec:provenance}
Our general technique for enumeration follows our earlier work~\cite{amarilli2017circuit}:
from the query and input tree, we compute in linear time a structure called a
\emph{provenance circuit}
to represent the results to enumerate,
we observe that it falls in a restricted circuit class,
and we conclude by showing a
general enumeration result
for circuits of this class. In this section, we review our construction of
provenance circuits in~\cite{amarilli2017circuit}, with some additional
observations
that will be useful for updates. In particular, we show an independent
\emph{balancing lemma} on input trees, which allows us to bound a parameter
of the circuit called \emph{dependency size}.
We will extend the formalism of this section to 
so-called \emph{hybrid circuits} in the next section; and we will show our enumeration result for such circuits
in Sections~\ref{sec:enumeration} and~\ref{sec:reachability}.

\subparagraph*{Set circuits.}
We start with some preliminaries about circuits.
A \emph{circuit} $C = (G, W, g_0, \mu)$ is a directed acyclic
graph $(G, W)$ whose vertices $G$ are called \emph{gates}, whose edges $W$
are called \emph{wires}, where $g_0 \in G$ is the \emph{output gate}, and
where $\mu$ is a function giving a \emph{type} to each gate of~$G$ (the possible
types depend on the kind of circuit).
The \emph{inputs} to a gate $g\in G$ are $\inp(g)
\colonequals \{g' \in G \mid (g',
g) \in W\}$ and the \emph{fan-in} of~$g$ is its number of inputs
$\card{\inp(g)}$.

We define \emph{set-valued circuits}, which are an equivalent rephrasing of
the \emph{circuits in
zero-suppressed semantics} used in~\cite{amarilli2017circuit}. They can also
be seen to be isomorphic to arithmetic circuits, and generalize
factorized representations used in database theory~\cite{olteanu2015size}.
The type function $\mu$ of a set-valued circuit maps each gate to one of $\aplus$,
$\atimes$, $\var$. We require 
that $\atimes$-gates have fan-in
0 or 2, and 
that $\var$-gates have fan-in~0: the latter are called
the \emph{variables} of~$C$, with $C_\var$ denoting the set of
variables.
Each gate $g$ of~$C$ \emph{captures} a set $\gset{g}$ of \emph{assignments}, where
each \emph{assignment} is a subset of~$C_\var$. These sets are
defined 
bottom-up 
as follows:
\begin{itemize}
  \item For a variable gate $g$, we have $\gset{g} \colonequals \{\{g\}\}$.
  \item For a $\aplus$-gate $g$, we have $\gset{g}
    \colonequals
    \bigcup_{g' \in \inp(g)} \gset{g'}$. In particular, if $\inp(g) = \emptyset$ 
    then $\gset{g} =
    \emptyset$.
  \item For a $\atimes$-gate $g$ with no inputs, we have $\gset{g} \colonequals
    \{\{\}\}$.
  \item For a $\atimes$-gate $g$ with two inputs $g_1$ and $g_2$, we have
    $\gset{g}
    \colonequals \{A_1 \cup A_2 \mid (A_1, A_2) \in
    \gset{g_1} \times \gset{g_2}\}$,
    which we write $\gset{g} \colonequals \gset{g_1} \relprod \gset{g_2}$ (this is the relational product).
\end{itemize}
The set $\gset{C}$ \emph{captured} by~$C$ is $\gset{g_0}$ for $g_0$ the output gate
of~$C$. Note that each assignment of~$\gset{C}$ is a satisfying assignment
of~$C$ when seen in the usual semantics of monotone circuits.

\subparagraph*{Structural requirements.}
Before defining our provenance circuits, we introduce some
structural restrictions that they will respect, and that will be useful for
enumeration.

The first requirement is that the circuit is a \emph{d-DNNF}. Our definition of
d-DNNF is inspired by~\cite{darwiche2001tractable} but
applies to set-valued circuits, as in~\cite{amarilli2017circuit} (see also the
z-st-d-DNNFs of~\cite{sugaya2017fast}).
For each gate $g$ of a set-valued circuit~$C$, we define the \emph{domain} $\dom(g)$ of~$g$ as the
variable gates having a directed path to~$g$. In particular, for~$g \in C_\var$, we have 
$\dom(g) = \{g\}$, and if $\inp(g) = \emptyset$ then $\dom(g) = \emptyset$.
We now call a $\atimes$-gate $g$ 
\emph{decomposable} if it has no inputs or if, letting $g_1' \neq g_2'$ be its
two inputs, the domains $\dom(g_1')$ and $\dom(g_2')$ are disjoint.
This ensures that no variable of~$C$ occurs both in an assignment of~$\gset{g_1'}$ and
in an assignment of~$\gset{g_2'}$.
We call a
$\aplus$-gate $g$ \emph{deterministic} if, for any two inputs
$g_1' \neq g_2'$
of~$g$, the sets $\gset{g_1'}$ and $\gset{g_2'}$
are disjoint, i.e., there is no assignment that occurs in both sets.
We call $C$ a \emph{d-DNNF} if every $\times$-gate is
decomposable and every $\aplus$-gate is deterministic.
This assumption allows us, e.g., to tractably compute the cardinality of the set $\gset{C}$ captured
by~$C$.

The second requirement on circuits is called \emph{upwards-determinism} and
was introduced in~\cite{amarilli2017circuit_extended}. In that paper, it was
used to show an improved memory bound;
in the present paper, we will always be able to enforce it.
A wire $(g, g')$ in a set-valued circuit $C$ 
is called \emph{pure} if:
\begin{itemize}
  \item $g'$ is a $\aplus$-gate; or
  \item $g'$
is a $\atimes$-gate and, letting $g''$ be the other input of~$g'$, we have
    $\{\} \in \gset{g''}$, i.e., $g''$ captures the empty assignment.
\end{itemize}
We say that a gate~$g$ is \emph{upwards-deterministic} if
there is at most one gate $g'$ such that $(g, g')$ is pure.
We call $C$ \emph{upwards-deterministic} if every gate of~$C$ is.

The third requirement concerns the \emph{maximal fan-in} of circuits, which is
simply defined for a set-valued circuit~$C$ as the maximal fan-in of a gate
of~$C$. We will require that the maximal fan-in is bounded by a constant.

The fourth and last requirement concerns a new parameter called \emph{dependency
size}. To introduce this, we define
the \emph{dependent gates} $\Delta(g')$ of a gate $g'$ in a set-valued circuit
$C$ as the gates
$g$ such that there is a directed path from~$g'$ to~$g$. Intuitively, the set
$\gset{g}$
captured by~$g$ may then depend on the set $\gset{g'}$ captured by~$g'$.
The \emph{dependency
size} of~$C$ is $\Delta(C) \colonequals \max_{g\in C} \card{\Delta(g)}$, i.e.,
the maximal number of gates that are dependent on any given gate~$g$. We will
require this parameter to be connected to the height of the input tree.

\subparagraph*{Set-valued provenance circuits.}
We can now define provenance circuits like in~\cite{amarilli2017circuit}. 
A set-valued circuit $C$ is a \emph{provenance circuit} of
a MSO query $Q(X_1, \ldots, X_m)$ on a~$\Gamma$-tree~$T$ if:
\begin{itemize}
  \item The variables of~$C$ correspond to the possible singletons, formally:
$C_\var = \{\langle
X_i :
    n\rangle \mid 1 \leq i \leq m \text{~and~} n \in T\}$; and
    \item The
set of assignments captured by~$C$ is the output of~$Q$ on~$T$, 
    formally: $\gset{C} = Q(T)$. Equivalently, 
for any tuple $\mathbf{B} = (B_1, \ldots, B_m)$ of subsets of~$T$, we have
$T \models Q(\mathbf{B})$ iff the assignment $\{\langle X_i : n\rangle \mid 1
\leq i \leq m \text{~and~} n
\in B_i\}$ is in~$\gset{C}$.
\end{itemize}

\begin{example}
  \label{exa:set}
  Consider the unlabeled tree~$T$ of Figure~\subref{fig:tree}, the alphabet
  $\Gamma = \{B\}$, and the MSO query $Q(x)$ with one free first-order
  variable asking for the leaf nodes
  whose $B$-annotation is different from that of its parent (i.e., the node
  carries label $B$ and the parent does not, or vice-versa). Consider the labeling
  $\lambda$ mapping $1$ to $\{B\}$ and $2$ and $3$ to~$\emptyset$. A
  set-valued circuit capturing the provenance of~$Q$ on~$(T, \lambda)$ is given in
  Figure~\subref{fig:set}.
\end{example}

We then know from~\cite{amarilli2017circuit_extended} that provenance circuits
can be computed efficiently, and they can be made to respect our structural
requirements:

\begin{theorem}[(from \cite{amarilli2017circuit}, Theorem~7.3)]
  \label{thm:provorig}
  For any fixed MSO query $Q(\mathbf{X})$ on~$\Gamma$-trees, given a
  $\Gamma$-tree $T$,
  we can compute in time $O(\card{T})$ a
  set-valued provenance circuit $C$ of~$Q$ on~$T$. Further, $C$ is a d-DNNF, it
  is upwards-deterministic, its maximal fan-in is constant, and its dependency
  size is in $O(\h(T))$, where $\h$ denotes the height of~$T$.
\end{theorem}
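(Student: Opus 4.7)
The plan is to reduce the construction of the provenance circuit to the classical product of~$T$ with a tree automaton derived from~$Q$. First, I would convert the fixed MSO query $Q(X_1, \ldots, X_m)$ into a deterministic bottom-up tree automaton $\mathcal{A}$ running on the enriched alphabet $\Gamma \times \{0,1\}^m$, where the extra $m$ bits at each node encode membership in each free variable~$X_i$. Since $Q$ is fixed, $\mathcal{A}$ has constant size. Determinism of $\mathcal{A}$ is crucial because it will imply determinism of the $\aplus$-gates of the output circuit: for any total labelling, there is a unique accepting run, so each captured assignment is witnessed by exactly one incoming transition at every node.

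Next I would build $C$ in one bottom-up pass over~$T$. For each node~$n$ and state~$q$ of~$\mathcal{A}$, I introduce a gate $g_{n,q}$ intended to capture exactly those partial assignments to the free variables on the subtree rooted at~$n$ that drive $\mathcal{A}$ into state~$q$ on that subtree. At an internal node~$n$ with children $n_1, n_2$, the gate $g_{n,q}$ is an $\aplus$-gate over all transitions $(q_1, q_2, \sigma) \to q$ of~$\mathcal{A}$; each such transition is realised by an $\atimes$-gate combining $g_{n_1,q_1}$, $g_{n_2,q_2}$, and a small local gadget over the fresh variable gates $\langle X_i : n\rangle$ that encodes the label/variable choice~$\sigma$ at~$n$. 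The output gate is the $\aplus$ of the $g_{\mathrm{root}(T), q_f}$ over accepting states~$q_f$. Each node contributes $O(1)$ gates, giving linear-time construction.

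The structural properties then follow. Constant maximal fan-in is immediate since the number of transitions depends only on~$Q$. Decomposability of every $\atimes$-gate holds because its inputs $g_{n_1,q_1}$ and $g_{n_2,q_2}$ draw their variables from the disjoint subtrees, and the local gadget only touches the fresh variables $\langle X_i : n\rangle$. Determinism of each $\aplus$-gate follows from determinism of~$\mathcal{A}$ as argued above. For dependency size, any gate reachable from $g_{n,q}$ must sit in a per-level gadget attached to an ancestor of~$n$ in~$T$, so $\Delta(C) = O(\h(T))$ since each ancestor contributes $O(1)$ gates.

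The delicate point, and the main obstacle I expect, is \emph{upwards-determinism}. A naive per-node gadget easily produces gates with several pure successors — for instance an intermediate $\aplus$ plugged into several sibling positions, or a gate used by several $\atimes$-gates whose other input captures $\{\}$ — which would violate the property. The cure is to design the per-node gadget carefully, routing the ``empty'' branches of each $\atimes$-gate through a single canonical chain and ensuring that each intermediate gate has a unique syntactic parent among its pure successors. This can be done locally at each node without breaking decomposability, inflating fan-in, or adding more than $O(1)$ gates per level, so the $O(\h(T))$ dependency bound is preserved. This is where the heavy technical work of~\cite{amarilli2017circuit_extended} lies, and I would appeal to that construction — verifying that its per-node gadgets indeed have constant size and yield the required dependency bound — to close the argument.
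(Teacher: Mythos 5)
Your proposal matches the paper's approach: the paper likewise builds the circuit by a bottom-up product of a deterministic bottom-up tree automaton with the tree, introducing one $\aplus$-gate $g^q_n$ per (node, state) pair fed by $\atimes$-gates over pairs of child states, and then reads off constant fan-in from the transition table, decomposability from subtree disjointness, determinism of $\aplus$-gates from automaton determinism, and $\Delta(C) = O(\h(T))$ from the fact that each gate's dependent gates live in the $O(1)$-size gadgets of its tree ancestors. The only cosmetic difference is that the paper materialises the per-node per-variable choices as extra leaves added to~$T$ (carrying $\enu$ / $\upd$ / $\fix$ labels, read as Boolean annotations by a rewritten automaton), whereas you keep them as a local gadget attached at the original node; both come to the same thing.

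Where your account drifts from the paper is in how hard you make upwards-determinism out to be. You propose to re-engineer the per-node gadget with a ``canonical chain'' to avoid multiple pure successors, but no extra gadgetry is needed in this construction: each $\atimes$-gate $g^{q_1,q_2}_n$ is syntactically an input to exactly one $\aplus$-gate (the one for $q = \delta(q_1,q_2,\cdot)$), and a gate $g^q_n$ feeds only into $\atimes$-gates at its parent; a wire $(g^q_n, g^{q,q_2}_{n'})$ is pure precisely when the sibling input $g^{q_2}_{n_2}$ captures $\{\}$, i.e.\ precisely when $q_2$ is the unique state reached by the all-zero labelling of the sibling subtree, whose uniqueness is again automaton determinism. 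So upwards-determinism falls out of the same determinism argument you already use for the $\aplus$-gates, and your appeal to the prior paper for that point is in the right spirit but overstates the difficulty. This is not an error — you reach a correct claim via a correct reference — just an unnecessarily indirect justification.
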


\begin{proofsketch}
  We recall the main proof technique: we convert $Q$ to a bottom-up
  deterministic tree automaton $A$ on $\Gamma$-trees, and we add nodes to~$T$
  to describe the possible valuations of variables. The provenance circuit
  $C$
  then captures the possible ways that $A$ can read~$T$
  depending on the valuation: we compute it with the construction
  of~\cite{amarilli2015provenance}, and is a d-DNNF thanks to automaton
  determinism (see~\cite{amarilli2016leveraging}).
  Upwards-determinism is shown like
  in~\cite{amarilli2017circuit_extended}.
  
  The bounds on fan-in and dependency size are not stated
  in~\cite{amarilli2017circuit,amarilli2017circuit_extended} but already hold 
  there. Specifically, the maximal fan-in is a function of the transition
  function of~$A$, i.e., it does not depend on~$T$.
  The bound on dependency size holds because $C$ is constructed
  following the structure of~$T$: we create for each tree node a gadget
  whose size depends only on~$A$, and we connect these
  gadgets precisely following the structure of~$T$, so that
  $\Delta(g)$ for any
  gate $g$ of~$C$ can only contain gates from the node~$n$ of~$g$ or from
  ancestors of~$n$ in the tree.
\end{proofsketch}

In the context of updates, the bound of dependency size will be crucial:
intuitively, it describes how many gates need to be updated when an update
operation modifies a gate of the circuit. As this bound depends on the
height of the input tree, we will conclude this section by a \emph{balancing
lemma} that ensures that this height can always be made logarithmic (which 
matches 
our desired update complexity). We will then add support for updates in the
next section by extending circuits to
\emph{hybrid circuits}.

\begin{toappendix}
  In this appendix, we prove Lemma~\ref{lem:balancing}:
\end{toappendix}

\subparagraph*{Balancing lemma.}
Our balancing lemma is a general observation on MSO query evaluation on trees,
and is in fact completely independent from provenance circuits.
It essentially says that the input tree can be assumed to be balanced. Formally, we will show that we can rewrite
any MSO query $Q$ on $\Gamma$-trees to an MSO query $Q'$
on a larger tree alphabet $\Gamma'$
so that any input tree $T$ for~$Q$ can be rewritten in linear time to a balanced tree
$T'$ on which~$Q'$ returns exactly the same output. Because we intend to
support update operations, the input tree $T$ will be unlabeled, and the
rewritten tree~$T'$ will work for any labeling of~$T$. Formally:

\begin{lemmarep}
  \label{lem:balancing}
  For any tree alphabet $\Gamma$ and MSO query $Q(\mathbf{X})$
  on~$\Gamma$-trees, we can compute a tree alphabet $\Gamma' \supseteq \Gamma$
  and MSO query $Q'(\mathbf{X})$ on~$\Gamma'$-trees such that the following
  holds. Given any unlabeled tree~$T$ with node set $N$, we can compute in
  linear time a $\Gamma'$-tree $(T', \lambda')$ with
  node set $N' \supseteq N$, such that 
  $\h(T') = O(\log \card{T})$
  and such that, for any labeling function $\lambda: T \to
  2^\Gamma$, we have $Q(\lambda(T)) = Q'(\lambda''(T'))$, where
  $\lambda''(n)$ maps $n \in T'$ to~$\lambda(n)$ if $n \in T$ and $\lambda'(n)$
  otherwise.
\end{lemmarep}

\begin{proofsketch}
  We prove Lemma~\ref{lem:balancing} by seeing the 
  input tree~$T$ as a relational structure~$I$ of
  treewidth~1, and invoking the result by
  Bodlaender~\cite{bodlaender1998parallel} to compute in linear
  time a constant-width tree decomposition of~$I$ which is of
  logarithmic height. We then translate the query $Q$ to a MSO query $Q'$ on
  tree encodings of this width, and compute from $T$ the tree encoding $T'$
  corresponding to the tree decomposition (we rename some nodes of~$T'$ to ensure that
  the nodes of~$T$ are reflected in~$T'$).
  Note that the balanced tree decompositions
  of~\cite{bodlaender1998parallel} were already used for similar
  purposes elsewhere, e.g., in~\cite{eppstein2017kbest}, end of Section~2.3.
\end{proofsketch}

\begin{toappendix}
  To prove Lemma~\ref{lem:balancing}, we will need to introduce preliminaries
about relational instances~\cite{abiteboul1995foundations}, tree decompositions,
and tree encodings.

\subparagraph*{Instances.}
A \emph{relational signature} is a set of \emph{relation names} 
together with an associated \emph{arity} (a non-zero natural number).
We fix a relational signature $\sigma_2$ that codes unlabeled trees, consisting 
of two binary relations $E_1$ and $E_2$ indicating the first and second
child of each internal node. For any tree alphabet $\Gamma$, we let
$\sigma_{\Gamma}$ denote a signature to represent labels of~$\Gamma$, i.e., one
unary relation $P_l$ for each $l \in \Gamma$. Last, for a tuple $\mathbf{X} =
X_1, \ldots, X_m$ of second-order variables, we let $\sigma_{\mathbf{X}}$ denote
a signature to represent the interpretation of these variables, i.e., one
unary relation $B_i$ for each $1 \leq i \leq m$. By \emph{monadic second-order
logic (MSO) over $\sigma$}, we denote MSO with the relations of~$\sigma$ and
equality in the usual way.

A \emph{relational instance} of a relational signature $\sigma$ is a set $I$ of
\emph{$\sigma$-facts} of the form $R(a_1, \ldots, a_n)$ where $a_1, \ldots, a_n$
are \emph{elements}, $R$ is a relation in~$\sigma$, and $n$ is the arity of~$R$.
The \emph{domain} $\dom(I)$ of~$I$ is the set of elements that occur in~$I$.

Given a $\Gamma$-tree $(T, \lambda)$, we can easily compute in linear time a couple $(I,
I_\lambda)$ where $I$ is a $\sigma_2$-instance describing the unlabeled tree~$T$
in the expected way (in particular, $\dom(I)$ is exactly the set of nodes
of~$T$), and $I_\lambda$ is the $\sigma_\Gamma$-instance
$\{P_l(n) \mid n \in T \text{~and~} l \in \lambda(n)\}$.

\subparagraph*{Tree decompositions.}
A \emph{tree decomposition} of an undirected graph $G = (V, E)$ is a tree
$\Theta$ (whose nodes are called \emph{bags}) 
and a labeling function $\dom:\Theta\to 2^V$ such
that:
\begin{itemize}
\item For every $e \in E$, there is $b \in \Theta$ such that $e \subseteq
\dom(b)$
\item For every $v \in V$, the set $T_v \colonequals \{b \in \Theta \mid v \in
\dom(b)\}$ is a connected subtree of~$\Theta$.
\end{itemize}
We still assume for convenience that tree decompositions are rooted, ordered,
binary, and full trees. Specifically, they will be computed as rooted binary
trees by \cite{bodlaender1998parallel}, they can be made full without loss of
generality (in linear time and without impacting the height) by adding empty
bags, and we can add an arbitrary order on the children of each internal bag to
make them ordered.
The \emph{width} of~$\Theta$ is $\max_{b \in \Theta} \card{\dom(b)} - 1$, and
the \emph{treewidth} of~$G$ is the smallest width of a tree decomposition
of~$G$.

A \emph{tree decomposition} of a relational instance $I$ is a tree decomposition
of its \emph{Gaifman graph}, i.e., the graph on vertex set $\dom(I)$ where there
is an edge between any two elements $a, a'$ that occur together in some fact. The
\emph{treewidth} of~$I$ is that of its Gaifman graph.

The definition of tree decompositions ensures that, for any relational instance
$I$ and tree decomposition $\Theta$, for any $a \in \dom(I)$, we can talk of the
topmost bag $b$ of~$\Theta$ such that $a \in \dom(b)$; we write this bag $\node(a)$.
This mapping $\node$ can be computed explicitly in linear time
given $I$
and $\Theta$ by~\cite[Lemma~3.1]{flum2002query}.

We will make a standard assumption on our tree decompositions, namely, that the
function $\node$ is an injective function: in other words, the root bag contains only one
element, and for any non-root bag $b$ with parent bag $b'$, we have
$\card{\dom(b) \setminus \dom(b')} \leq 1$. This requirement
can be enforced on a tree decomposition $\Theta$
in linear time using standard techniques, without impacting the width
of~$\Theta$, and only multiplying the height of~$\Theta$ by a constant (assuming
that the width is constant): specifically, we replace each bag violating the
condition by a chain of bags where the new elements are introduced one after the
other. Hence, we will always make this assumption.

We now recall the result of Bodlaender~\cite{bodlaender1998parallel}, which is
the key to our construction:

\begin{theorem}[from \cite{bodlaender1998parallel}]
  \label{thm:bodlaender}
  For any relational signature $\sigma$,
  given a relational instance~$I$ on~$\sigma$
  of width~$w \in \NN$, we can
compute in linear time in~$I$ a tree decomposition $\Theta$ of~$I$ of
  width~$O(w)$,
  such that $\h(\Theta)$ is in~$O(\log(\card{I}))$.
\end{theorem}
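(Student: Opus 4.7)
Since Theorem~\ref{thm:bodlaender} is quoted from~\cite{bodlaender1998parallel}, the paper invokes it as a black box; nonetheless I sketch how one would establish such a balanced tree decomposition from scratch, as a two-stage construction.

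The plan is to first build \emph{some} tree decomposition of the Gaifman graph of~$I$ of width $O(w)$ in linear time, and then rebalance it to logarithmic height while keeping the width $O(w)$. For the first stage, I would invoke Bodlaender's earlier linear-time treewidth algorithm, which for each fixed integer~$k$ decides whether an input graph has treewidth at most~$k$ and returns a width-$k$ decomposition when it does. Applied with $k = w$ to the Gaifman graph of~$I$, this yields in time $O(\card{I})$ a tree decomposition $\Theta_0$ of width $w_0 = O(w)$, but possibly of linear height. For the second stage, I would rebalance $\Theta_0$ by recursive centroid splitting. Every tree has a centroid edge $(b_1, b_2)$ whose removal leaves two subtrees carrying at most a constant fraction of the elements of~$I$ each. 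The intersection $S = \dom(b_1) \cap \dom(b_2)$ has size at most $w_0 + 1$ and, by the tree decomposition property, is a separator in the Gaifman graph of~$I$. I would create a new root bag with content~$S$, split $I$ accordingly, and recursively produce balanced tree decompositions for the two sides, adding $S$ to every descendant bag so that edges crossing the cut remain covered. The recursion depth is $O(\log \card{I})$, which bounds $\h(\Theta)$, and a careful charging argument shows that the accumulated interface added to each bag across all levels contributes only $O(w)$, so the final width stays $O(w)$.

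The main obstacle is obtaining a truly linear running time, rather than $O(\card{I} \log \card{I})$: the naive centroid recursion above does $\Theta(\card{I})$ work per recursive level, for a total of $\Theta(\card{I} \log \card{I})$. Bodlaender's contribution, derived from his parallel algorithm, is an intricate bottom-up construction that amortizes the centroid computations and the bag augmentations across all levels so that each element of~$I$ is touched only a constant number of times in total. This delicate algorithmic engineering is precisely what we would inherit from~\cite{bodlaender1998parallel}, and since the statement is quoted verbatim from that work, the appendix is entitled to rely on it as a black box rather than reprove it in detail.
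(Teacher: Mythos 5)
Your treatment matches the paper's: Theorem~\ref{thm:bodlaender} is imported as a black box from~\cite{bodlaender1998parallel}, and the paper's only added content is the remark that the parallel algorithm can be simulated sequentially in linear time (citing the end of Section~2.3 of~\cite{eppstein2017kbest}), which you likewise defer to the citation. Your extra sketch of the underlying construction (width-$O(w)$ decomposition followed by depth reduction, with the linear-time amortization being the hard part inherited from Bodlaender) is a fair informal account and is not load-bearing, so the unverified ``careful charging argument'' for the width bound is not a gap in what you actually rely on.
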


Specifically, the algorithm of~\cite{bodlaender1998parallel} is described for a
parallel machine, but can be run sequentially in linear time, as explained
in~\cite{eppstein2017kbest}, end of Section~2.3.

\subparagraph*{Tree encodings.}
If we fix a relational signature $\sigma$ and a treewidth bound $k\in\NN$, we
can compute an alphabet $\Gamma^k_\sigma$, called the \emph{alphabet of tree
encodings for~$\sigma$ and~$k$}, which ensures the following: given any
$\sigma$-instance $I$ with a tree decomposition $\Theta$ of width~$k$, we
can translate $I$ and $\Theta$ in linear time to a $\Gamma^k_\sigma$-tree $E$ (called a
\emph{tree encoding} of~$I$) that can be decoded back in linear time to an
instance isomorphic to~$I$. What is more, Boolean MSO formulas on
$\sigma$-instances (i.e., MSO formulas without free variables) can be
translated to Boolean MSO formulas on $\Gamma^k_\sigma$-trees that are
equivalent through encoding and decoding. An example of such a
scheme is given in~\cite{flum2002query}; we will use a different scheme,
detailed in~\cite{amarilli2016leveraging}, which ensures a property dubbed
\emph{subinstance-compatibility}: intuitively, removing a fact $F$ from~$I$ amounts
to toggling labels on a node of the tree encoding that corresponds to~$F$
(without changing the skeleton of the tree encoding).
The labels of $\Gamma^k_\sigma$ intuitively consist of a pair comprising a
\emph{domain}, i.e., a subset of elements among $2k+2$ fixed element names, and an
optional \emph{fact} on the elements of the domain.
We omit the formal definition of $\Gamma^k_\sigma$; see Section~3.2.1
of~\cite{amarilli2016leveraging} for details.

We are now ready to conclude the proof of Lemma~\ref{lem:balancing}:

\begin{proof}[Proof of Lemma~\ref{lem:balancing}]
  Let $Q(\mathbf{X})$ be the input query on $\Gamma$-trees.
  Let $\sigma \colonequals \sigma_2 \cup \sigma_\Gamma \cup \sigma_{\mathbf{X}}$.
  We let $Q'$ be the Boolean MSO query on $\sigma$-instances obtained from $Q$
  in the expected way, making it Boolean by replacing each second-order variable
  $X_i$ with the unary relation $B_i$ of~$\sigma_{\mathbf{X}}$. Given an input tree~$T$, we compute in
  linear time the $\sigma_2$-instance $I$ which represents it.
  It is clear that, given a labeling $\lambda:T\to 2^\Gamma$,
  recalling our earlier definition of the $\sigma_\Gamma$-instance ~$I_\lambda$
  from~$(T, \lambda)$,
  the output $Q(\lambda(T))$ of~$Q$ on~$\lambda(T)$ is equal to the set of
  $\sigma_{\mathbf{X}}$-instances $I'$ of $B_i$-facts on~$\dom(I)$ (seeing each
  such instance $I'$ as a set of singletons of the form $\langle X_i: a \rangle$) such that
  $I \cup I_\lambda \cup I'$ satisfies $Q'$.

  Let $w$ be the width of the tree decomposition obtained when applying
  Theorem~\ref{thm:bodlaender} to an input tree decomposition of width~$1$ (note
  that we have not specified the input yet).
  Let us compute from~$Q'$ the Boolean MSO query $Q''$
  on the alphabet~$\Gamma^w_{\sigma}$ of tree encodings for width~$w$
  which is equivalent to~$Q'$ on $\sigma$-instances (up to encoding and
  decoding), i.e., an instance on~$\sigma$ satisfies $Q'$ iff its encoding
  as a $\Gamma^w_{\sigma}$-tree satisfies $Q''$. We take $\Gamma'$ to consist
  of~$\Gamma^w_{\sigma}$ plus a special label $\nfix$, to be used later.

  Now, as~$T$ is a tree, the treewidth of~$I$ 
  is~$1$. Let us define an instance $I^+$ by adding to~$I$
  the instance $I^\Gamma$ of all possible $\sigma_\lambda$-facts on
  $\dom(I)$, plus the instance $I^{\mathbf{X}}$ of all possible
  $\sigma_{\mathbf{X}}$-facts on~$\dom(I)$. As all these additional facts
  are unary, the instance $I^+$ still has treewidth~$1$.
  Hence, by Theorem~\ref{thm:bodlaender},
  we can compute in linear time
  in~$I_+$ a tree decomposition $\Theta$ of~$I_+$ of treewidth~$w$ and logarithmic
  height.
  We also compute in linear
  time the mapping $\node:I_+\to\Theta$, and a tree encoding $E$ of~$I_+$, i.e., a
  $\Gamma^w_{\sigma}$-tree.
  
  Thanks to subinstance-compatibility, we know that, for any
  labeling $\lambda:T\to 2^\Gamma$ and answer tuple $\mathbf{B}$ of subsets of~$I$,
  letting
  $I_\lambda \subseteq I^\Gamma$ and $I_{\mathbf{B}} \subseteq I^{\mathbf{X}}$
  be the $\sigma_\Gamma$- and
  $\sigma_{\mathbf{X}}$-instances that respectively denote it, then we can obtain
  a tree encoding of $I \cup I_\lambda \cup I_{\mathbf{B}} \subseteq I^+$ by toggling the
  labels of some nodes of~$E$. Specifically, each fact of $I^\Gamma \cup
  I^{\mathbf{X}}$ corresponds to one node of~$E$ whose label has to be changed;
  further, this mapping can be computed in linear time
  (see~\cite{amarilli2016leveraging}, Lemma~3.2.6).

  The last thing to argue is that we can rename the nodes of~$E$ so that they
  correspond to the nodes of~$T$ associated to them, ensuring that, given a
  labeling function $\lambda:T\to 2^\Gamma$ of the tree~$T$, we can use it to
  relabel~$E$. (This
  differs slightly from the original construction
  of~\cite{amarilli2016leveraging}, because we want each node of~$T$ to be
  associated to one single node in~$E$, carrying all possible variables and
  labels; by contrast, in the construction of~\cite{amarilli2016leveraging},
  every fact corresponds to a specific node of~$E$.)
  To fix this, we
  modify $E$ in linear time to another $\Gamma^k_\sigma$-tree $E'$:
  for each $n \in \dom(I)$, letting $b \colonequals
  \node(a)$, we replace $b$ by a gadget with two copies $b_1$ and $b_2$ of $b$,
  with $b_2$ being the left child of~$b_1$. The label of~$b_1$ is that of~$b$,
  and the label of~$b_2$ is made of the same domain as~$b$ but without any fact;
  see the exact definition of $\Gamma^k_\sigma$ in Section~3.2.1
  of~\cite{amarilli2016leveraging} for details.
  We then add a right
  child to~$b_1$ which is a new node~$n$ 
  identified to the element~$n$ in~$\dom(I)$, 
  which itself corresponds to the node $n$ in~$T$; the label of~$b_1$ is the
  fixed special label~$\nfix$. 
  This construction is well-defined because the function $\node$ is injective.
  We must now argue that the query $Q''$ can
  be modified
  (independently from~$I$)
  to a MSO query $Q'''$ on $(\Gamma \cup \Gamma^k_{\sigma})$-trees
  to read labels and variable assignments from these new nodes: 
  specifically, instead of
  reading (the encodings of) the $\sigma_\Gamma$-facts about (the encoding of)
  an element $n \in \dom(I)$,
  the query $Q'''$ should read 
  the label in~$\Gamma$ of the new node of~$E'$ identified with~$n$; likewise,
  instead of reading (the encodings of) the $\sigma_{\mathbf{X}}$-facts on an
  element $n \in \dom(I)$ directly
  from~$E$, the query should read the
  $\mathbf{X}$-annotation of this same new node in~$E'$ identified with~$n$.
  To do this, the translations of the atoms from
  $\sigma_\Gamma$ and $\sigma_{\mathbf{X}}$ in~$Q''$ are replaced in~$Q'''$
  by a gadget
  which finds the bag where the corresponding element was introduced (i.e., the
  one for which it is in the image of~$\node$), finds the new node that we
  added with label~$\nfix$, and reads the label and annotation of this node. We also add a conjunct
  to~$Q'''$ to assert that the only nodes that can be part of the interpretation
  of the~$\mathbf{X}$ are the new nodes in~$E'$ with label~$\nfix$, thus ensuring that the set
  of answers of~$Q'''$ on any labeling $\lambda''(E')$ of~$E'$ is correct. 
  This concludes the proof.
\end{proof}

\end{toappendix}

\section{Hybrid Circuits for Updates}
\label{sec:hybrid}
In this section, we extend set-valued circuits to support updates, defining
\emph{hybrid circuits}. We then extend Theorem~\ref{thm:provorig} for these
circuits. Last, we introduce a new structural notion of \emph{homogenization}
of hybrid circuits
and show how to enforce it. We close the section
by stating our main enumeration result on hybrid circuits, which implies 
our main theorem (Theorem~\ref{thm:main}), and is proved in the two next
sections.

\subparagraph*{Hybrid circuits.}
A hybrid circuit is intuitively similar to a set-valued circuit, but it
additionally 
has \emph{Boolean variables} (which can be toggled when updating),
Boolean gates ($\land$, $\lor$, $\neg$), and gates labeled~$\ctimes$ which
keep or discard a set of assignments depending on a Boolean value.
Formally, a \emph{hybrid circuit} $C = (G, W, g_0, \mu)$ is a circuit where
the possible
gate types are
$\avar$ (\emph{set-valued variables}),
$\bvar$ (\emph{Boolean variables}),
$\aplus$, $\atimes$, $\ctimes$,
$\land$, $\lor$, and $\neg$.
We call a gate
\emph{Boolean} if its type is $\bvar$, $\land$, $\lor$, or~$\neg$; and \emph{set-valued}
otherwise. We require that the output gate $g_0$ is set-valued and that the following
conditions hold:
\begin{itemize}
  \item $\avar$-gates and $\bvar$-gates have fan-in exactly 0;
  \item All inputs to $\land$-gates, $\lor$-gates, and $\neg$-gates are Boolean,
    and $\neg$-gates have fan-in exactly~$1$;
  \item All inputs to $\aplus$ and $\atimes$-gates are set-valued, and $\atimes$-gates have fan-in either 0 or 2;
  \item $\ctimes$-gates have one set-valued input and one
    Boolean input (so they have fan-in exactly~2).
\end{itemize}
We write $C_\bvar$ to denote the gates of~$C$ of type $\bvar$, called the
\emph{Boolean variables}, and define likewise the \emph{set-valued variables}
$C_\avar$. An example hybrid circuit is illustrated in Figure~\subref{fig:hybrid}.

\tikzstyle{treenode}=[draw,circle]
\tikzstyle{agate}=[draw,circle,inner sep=1.5pt]
\tikzstyle{avar}=[draw,circle,inner sep=-1pt]
\tikzstyle{bvar}=[draw,inner sep=1.5pt]
\tikzstyle{neggate}=[draw,inner sep=.5pt]

\begin{figure}
    \centering
    \begin{subfigure}[t]{.145\textwidth}
        \centering
        \begin{tikzpicture}[yscale=.8,xscale=.5]
        \node[treenode] (a) at (0, 0) {1};
        \node[treenode] (b) at (-1, -1) {2};
        \node[treenode] (c) at (1, -1) {3};
        \path[draw] (a) -- (b);
        \path[draw] (a) -- (c);
        \end{tikzpicture}
        \caption{Example\\\mbox{unlabeled tree}}
        \label{fig:tree}
    \end{subfigure}\hfill
    \begin{subfigure}[t]{.145\textwidth}
        \centering
        \begin{tikzpicture}[yscale=.8,xscale=.45]
        \node[agate] (a) at (0, 0) {$\aplus$};
        \node[avar] (b) at (-1, -1) {\sgx{2}};
        \node[avar] (c) at (1, -1) {\sgx{3}};
        \path[draw] (a) -- (b);
        \path[draw] (a) -- (c);
        \end{tikzpicture}
        \caption{Example\\\null~~~set circuit}
        \label{fig:set}
    \end{subfigure}\hfill
    \begin{subfigure}[t]{0.58\textwidth}
      \begin{tikzpicture}[xscale=1.85,yscale=.65]
          \node[agate] (a) at (0, -.9) {$\aplus$};
          \node[agate] (b1) at (-1, -1.4) {$\ctimes$};
          \node[agate] (b2) at (1, -1.4) {$\ctimes$};
          \node[bvar] (c1) at (-1.5, -2.5) {\sgb{1}};
          \node[agate] (c2) at (-.5, -1.95) {$\aplus$};
          \node[neggate] (c3n) at (.775, -1.895) {$\bm{\neg}$};
          \node[bvar] (c3) at (.5, -2.5) {\sgb{1}};
          \node[agate] (c4) at (1.5, -1.95) {$\aplus$};
          \node[agate] (d1) at (-1, -2.5) {$\ctimes$};
          \node[agate] (d2) at (0, -2.5) {$\ctimes$};
          \node[agate] (d3) at (1, -2.5) {$\ctimes$};
          \node[agate] (d4) at (2, -2.5) {$\ctimes$};
          \node[neggate] (e1n) at (-1.115, -3.1) {$\bm{\neg}$};
          \node[bvar] (e1) at (-1.25, -3.7) {\sgb{2}};
          \node[avar] (e2) at (-.75, -3.7) {\sgx{2}};
          \node[neggate] (e3n) at (-.115, -3.1) {$\bm{\neg}$};
          \node[bvar] (e3) at (-.25, -3.7) {\sgb{3}};
          \node[avar] (e4) at (.25, -3.7) {\sgx{3}};
          \node[bvar] (e5) at (.75, -3.7) {\sgb{2}};
          \node[avar] (e6) at (1.25, -3.7) {\sgx{2}};
          \node[bvar] (e7) at (1.75, -3.7) {\sgb{3}};
          \node[avar] (e8) at (2.25, -3.7) {\sgx{3}};
          \path[draw] (a) -- (b1);
          \path[draw] (a) -- (b2);
          \path[draw] (b1) -- (c1);
          \path[draw] (b1) -- (c2);
          \path[draw] (b2) -- (c3n);
          \path[draw] (b2) -- (c4);
          \path[draw] (c3n) -- (c3);
          \path[draw] (c2) -- (d1);
          \path[draw] (c2) -- (d2);
          \path[draw] (c4) -- (d3);
          \path[draw] (c4) -- (d4);
          \path[draw] (d1) -- (e1n);
          \path[draw] (d1) -- (e2);
          \path[draw] (d2) -- (e3n);
          \path[draw] (d2) -- (e4);
          \path[draw] (d3) -- (e5);
          \path[draw] (d3) -- (e6);
          \path[draw] (d4) -- (e7);
          \path[draw] (d4) -- (e8);
          \path[draw] (e1n) -- (e1);
          \path[draw] (e3n) -- (e3);
        \end{tikzpicture}
        \caption{Example hybrid circuit.
        Boolean gates are squared,\\
        set-valued gates are
        circled, and variables are
        repeated}
        \label{fig:hybrid}
    \end{subfigure}\hfill
    \begin{subfigure}[t]{.125\textwidth}
    \centering
      \begin{tikzpicture}[yscale=.8,xscale=.5]
      \node[agate] (a) at (0, 0) {$\aplus$};
      \node[agate] (b1) at (-1, -.8) {$\aplus$};
      \node[agate] (b2) at (1, -.8) {$\aplus$};
        \node[avar] (c1) at (-1, -2) {\sgx{2}};
        \node[avar] (c2) at (1, -2) {\sgx{3}};
      \path[draw] (a) -- (b1);
      \path[draw] (a) -- (b2);
      \path[draw] (b1) -- (c1);
      \path[draw] (b1) -- (c2);
      \path[draw] (b2) -- (c1);
      \path[draw] (b2) -- (c2);
      \end{tikzpicture}
      \caption{Example\\switchboard}
      \label{fig:switchboard}
    \end{subfigure}
\end{figure}

Unlike set-valued circuits, which capture only one set of assignments, hybrid
circuits capture several different sets of assignments, depending on the value
of the Boolean variables (intuitively corresponding to the tree labels).
This value is given by
a \emph{valuation} of~$C$, i.e., a function
$\nu:C_\bvar\to\{0,1\}$. Given such a valuation~$\nu$,
each Boolean gate $g$ \emph{captures} a Boolean value $\bval{\nu}{g} \in \{0, 1\}$,
computed bottom-up in the usual way: we set $\bval{\nu}{g} \colonequals \nu(g)$
for $g \in C_{\bvar}$, and otherwise $\bval{\nu}{g}$ is
the result of the Boolean operation given by the type $\mu(g)$ of~$g$, applied
to the Boolean values $\bval{\nu}{g'}$ captured by the inputs $g'$ of~$g$
(in particular, a $\land$-gate with no
inputs always has value~$1$, and a $\lor$-gate with no inputs always has
value~$0$).

We then define the \emph{evaluation} of~$C$ under~$\nu$ as the
set-valued circuit $\nu(C)$ obtained as follows.
First, replace each Boolean gate $g$ of~$C$ by a
$\atimes$-gate with no inputs (capturing~$\{\{\}\}$) 
if $\bval{\nu}{g}=1$, and by a $\aplus$-gate with no
inputs (capturing~$\emptyset$) if $\bval{\nu}{g} = 0$. Second, 
relabel each $\ctimes$-gate $g$ of~$C$ to be 
a $\atimes$-gate.
Using $\nu(C)$, for each set-valued gate~$g$ of~$C$, we define the set
\emph{captured by~$g$ under~$\nu$}: it is the set of
assignments (subsets of~$C_{\avar}$) that $g$
captures in~$\nu(C)$.
The set $\gsetv{\nu}{C}$
\emph{captured} by~$C$ under~$\nu$ is then~$\gsetv{\nu}{g_0}$,
for~$g_0$ the output gate of~$C$.

We last lift the structural definitions from set-valued circuits to hybrid
circuits. The \emph{maximal fan-in} and \emph{dependency size} of a hybrid
circuit are defined like before (these definitions do not depend
on the kind of circuit). A hybrid circuit $C$ is a \emph{d-DNNF},
resp.\ is \emph{upwards-deterministic}, if for every valuation $\nu$ of~$C$, the
set-valued circuit $\nu(C)$ has the same property. For instance, 
the hybrid circuit in Figure~\subref{fig:hybrid} is upwards-deterministic and is
a d-DNNF.

\subparagraph*{Hybrid provenance circuits.}
We can now use hybrid circuits to define provenance with support for updates.
The set-valued
variables of the circuit will correspond to singletons as before, describing
the interpretation of the \emph{free variables} of the query;
and the Boolean
variables stand for a different kind of singletons, describing which
\emph{labels}
are carried by each node.
To describe this formally, we will consider
an \emph{unlabeled} tree~$T$, and define a \emph{labeling assignment} of~$T$ for a
tree alphabet~$\Gamma$ as a set of singletons of the form $\langle l : n \rangle$
where $l \in \Gamma$ and $n \in T$. Given a labeling assignment $\alpha$, we can define
a labeling function $\lambda_\alpha$ for~$T$, which maps each node $n\in T$ to
$\lambda(n) \colonequals \{l \in \Gamma \mid \langle l: n\rangle \in \alpha\}$.
Now, we say that a hybrid
circuit~$C$ is a \emph{provenance circuit} of a MSO query $Q(X_1, \ldots, X_m)$
on an \emph{unlabeled} tree~$T$ if:
\begin{itemize}
  \item The set-valued variables of~$C$ correspond
to the possible singletons in an assignment, formally $C_{\avar} = \{\langle X_i : n
    \rangle \mid 1 \leq i \leq m \text{~and~} n \in T \}$;
\item The Boolean variables of~$C$ correspond to the possible singletons in a
  labeling assignment, formally $C_{\bvar} = \{\langle l : n\rangle \mid 
    l \in \Gamma \text{~and~} n \in T\}$;
  \item For any labeling assignment $\alpha$, let $\nu_\alpha$ be the Boolean valuation
    of~$C_{\bvar}$ mapping each $\langle l : n\rangle$ to~$0$ or~$1$ depending
    on whether $\langle l : n\rangle \in \alpha$ or not, and let $\lambda_\alpha$ be the
    labeling function on~$T$ defined as above. Then we require that the set of
    assignments $\gsetv{\nu_\alpha}{C}$ captured by~$C$ under~$\nu_\alpha$ is
    exactly the output of~$Q$ on~$\lambda_\alpha(T)$, formally, $\gsetv{\nu_\alpha}{C}
    = Q(\lambda_\alpha(T))$.
\end{itemize}
In other words, for each labeling $\lambda$ of the tree~$T$, considering the
valuation $\nu$ that sets the Boolean
variables of~$C$ accordingly, then $\nu(C)$ is a provenance circuit
for~$Q$ on~$\lambda(T)$.

\begin{example}
  Recall the query $Q(x)$ and alphabet $\Gamma = \{B\}$ of Example~\ref{exa:set}, and the tree~$T$ of
  Figure~\subref{fig:tree}. A hybrid circuit $C$ capturing the provenance of~$Q$ on~$T$
  is given in Figure~\subref{fig:hybrid} (with variable gates being drawn at
  multiple places for legibility): square leaves correspond to Boolean
  variables testing node labels, and
  circle leaves correspond to set-valued variables capturing a singleton
  of the form \sgx{n} for some $n \in T$.
  In particular, for
  the labeling~$\lambda$ of Example~\ref{exa:set},
  the corresponding valuation $\nu$ maps \sgb{1} to~$1$ and \sgb{2} and \sgb{3}
  to~$0$, and 
  the evaluation $\nu(C)$ of~$C$ under~$\nu$ captures the same set as the
  circuit of Figure~\subref{fig:set}.
\end{example}

We can now extend Theorem~\ref{thm:provorig} to compute a hybrid provenance
circuit as follows:

\begin{toappendix}
  \subsection{Proof of the Provenance Circuit Theorem}
  \label{apx:prfprovenance}
  In this appendix, we prove Theorem~\ref{thm:provenance}:
\end{toappendix}

\begin{theoremrep}
  \label{thm:provenance}
  For any fixed MSO query $Q(\mathbf{X})$ on $\Gamma$-trees, given an unlabeled
  tree $T$, we can compute in time $O(\card{T})$ a hybrid provenance circuit $C$
  which is a d-DNNF, is upwards-deterministic, has constant maximal fan-in, and
  has dependency size in $O(\h(T))$.
\end{theoremrep}

\begin{proofsketch}
  The proof is analogous to that of Theorem~\ref{thm:provorig}. The only
  difference is that the automaton now reads the label of each node as if it
  were a variable, so that the provenance circuit $C$ also reflects these label
  choices as Boolean variables.
\end{proofsketch}

\begin{toappendix}
  The general idea is that,
given the MSO query $Q(\mathbf{X})$ on $\Gamma$-trees, 
writing $\mathbf{X} = X_1, \ldots, X_m$, 
we define a query
$Q'(\mathbf{X}, \mathbf{Y})$ on unlabeled trees, where $\card{\mathbf{Y}} =
\Gamma$, with one second-order variable $Y_l$ corresponding to each $l \in
\Gamma$. The construction is simply that we replace each unary predicate $P_l$
in~$Q$ by the corresponding second-order variable $Y_l$. It is now obvious that,
for any labeled tree $(T,\lambda)$, defining $D_l \colonequals \{n \in T \mid
l \in \lambda(n)\}$ for each $l \in \Gamma$,
for any set $\mathbf{B} = B_1, \ldots, B_m$ of subsets of~$T$, 
we have $(T, \lambda) \models Q(\mathbf{B})$ iff $T \models Q(\mathbf{B},
\mathbf{D})$. In other words, we have simply turned node labels into
second-order variables.

Now, at a high level, we can simply construct a provenance circuit of~$Q'$
on~$T$ in the sense of Theorem~\ref{thm:provorig}, replace the input gates
corresponding to the $Y_j$ variables by a Boolean input gate, and observe that
the desired properties hold. We will now give a self-contained proof of the
construction, to make sure that we reflect the changes in definitions between
the present work and~\cite{amarilli2017circuit,amarilli2017circuit_extended}.

\subparagraph*{Tree automata.}
We will need to introduce some prerequisites about tree automata.
Given a tree alphabet $\Lambda$,
a \emph{bottom-up deterministic tree automaton} on~$\Lambda$, or
\emph{$\Lambda$-bDTA}, is a tuple $A = (Q, F, \iota, \delta)$ where $Q$ is a
finite set of \emph{states}, $F \subseteq Q$ are the \emph{final} states, $\iota
: \Lambda \to Q$ is the \emph{initial function}, and $\delta : Q^2 \times
\Lambda
\to Q$ is the \emph{transition function}. The \emph{run} of a $\Lambda$-bDTA~$A$ on a
$\Lambda$-tree $T$ is the function $\rho : T \to Q$ defined inductively as
$\rho(n) \colonequals \iota(\lambda(n))$ when $n$ is a leaf, and $\rho(n)
\colonequals \delta(\rho(n_1), \rho(n_2), \lambda(n))$ when $n$ is an internal node
with children $l_1$ and $l_2$. We say that $A$ \emph{accepts} the
tree~$T$ if the run $\rho$ of~$A$ on~$T$ maps the root of~$T$ to a final state.

We will be interested in bDTAs to capture our non-Boolean query $Q'$ on
unlabeled trees. Let $\mathbf{Z} \colonequals \mathbf{X} \cup \mathbf{Y}$
be the set of \emph{variables}, and let $\Lambda \colonequals 2^{\mathbf{Z}}$,
where $2^{\mathbf{Z}}$ denotes
the powerset of~$\mathbf{Z}$.
Letting $T$ be an unlabeled tree, we call a
\emph{$\mathbf{Z}$-annotation} of~$T$ a function $\nu: T \to 2^{\mathbf{Z}}$: the annotation intuitively describes the
interpretation of the variables of~$\mathbf{Z}$ by annotating each node with the
set of variables to which it belongs.
Letting $A$ be a $2^{\mathbf{Z}}$-bDTA, $T$ be an
unlabeled tree, and $\nu$ be a $\mathbf{Z}$-annotation of~$T$, we say that $\nu$ is
a \emph{satisfying annotation} of~$A$ on~$T$ if $A$ accepts $\nu(T)$. In this
case, we see $\nu$ as defining an \emph{assignment} $\alpha_\nu$, which is
the set
$\{\langle Z_i: n \rangle \mid 1 \leq i \leq \card{Z} \text{~and~} n \in T \}$.
The \emph{output} of $A$ on~$T$, written $A(T)$, is the set of assignments
corresponding to its satisfying annotations.
Following Thatcher and Wright~\cite{thatcher1968generalized}, and determinizing the automaton
using standard techniques \cite{tata}, the output of an MSO query
(here, on an unlabeled tree)
can be computed as the output of an automata for that query.
Formally:

\begin{lemma}[\cite{thatcher1968generalized,tata}]
  \label{lem:automata}
  Given a MSO query $Q(\mathbf{Z})$ on unlabeled trees, we can compute a
  $2^{\mathbf{Z}}$-bDTA $A$ such that, for any unlabeled tree $T$, we have
  $Q(T) = A(T)$.
\end{lemma}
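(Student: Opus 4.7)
The plan is the classical Thatcher--Wright argument: induct on the structure of the MSO formula $Q(\mathbf{Z})$, using nondeterministic bottom-up tree automata (bNTAs) over the alphabet $2^{\mathbf{Z}}$ as the intermediate object, and only at the end apply determinization. First I would preprocess $Q$ to remove free first-order variables as already explained in the preliminaries (replacing each first-order $z_i$ by a second-order $Z_i$ together with a $\mathrm{Sing}(Z_i,z_i)$ atom); this lets me assume without loss of generality that every free variable of $Q$ is second-order, which matches the format expected by the automaton, namely that a $2^{\mathbf{Z}}$-labeled tree encodes the interpretation of $\mathbf{Z}$ directly by node labels.

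Next I would build bNTAs for atomic subformulas. The atoms of the signature of unlabeled trees are $E_1(x,y)$, $E_2(x,y)$ and equality, plus the membership atoms $Z_i(x)$ arising from the free (and bound, after Skolemizing quantifiers as described below) second-order variables. Each such atom is equivalent, on a $2^{\mathbf{Z}}$-tree encoding first-order witnesses via fresh singleton second-order variables, to a simple local test on the labels and on the parent/child relation; small bNTAs for these are straightforward to write down. I would then appeal to the standard closure constructions: union of bNTAs by disjoint union of state spaces plus a new initial guess, intersection by product construction, and second-order existential projection by relabeling each transition to forget the $Z_i$-component of the label, which is naturally nondeterministic. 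Negation is handled by first determinizing via the bottom-up subset construction (which is the key nontrivial step: each subset of states becomes a state, and transitions are defined so that the run on any input realizes the union of all nondeterministic runs simultaneously), and then swapping final and non-final states.

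With these closure properties, structural induction on the formula produces, for every MSO subformula, a bNTA whose language on $2^{\mathbf{Z}}$-trees is precisely the set of $\mathbf{Z}$-annotations satisfying it. Applying the construction at the top level to $Q(\mathbf{Z})$ itself, and then one final subset construction, yields a $2^{\mathbf{Z}}$-bDTA $A$ with $A(T)=Q(T)$ for every unlabeled $T$, as required. The main obstacle is the standard one: each nesting of negation under a second-order quantifier forces a determinization and therefore an exponential blow-up in automaton size, so the overall size of $A$ is non-elementary in $\card{Q}$. This is harmless here, however, because $Q$ is fixed, so $A$ has constant size with respect to the input tree $T$; only the existence of $A$ (not a good bound on its size) is needed for Lemma~\ref{lem:automata} and for the downstream use in Theorem~\ref{thm:provenance}.
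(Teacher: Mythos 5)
Your proof is correct and follows exactly the classical Thatcher--Wright argument that the paper invokes by citation (structural induction with closure of nondeterministic bottom-up tree automata under union, intersection, projection, and complementation via the subset construction, with a final determinization); the paper gives no proof of its own beyond pointing to \cite{thatcher1968generalized,tata}. Your remarks on the non-elementary blow-up and its harmlessness in data complexity are also accurate.
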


\subparagraph*{Restricting to Boolean annotations.}
It will be more convenient in the sequel to assume that each tree node carries
one
single Boolean annotation rather than many, and to distinguish the annotations
corresponding to~$\mathbf{X}$ (the original variables of~$Q$, called
\emph{enumerable}), and those corresponding to~$\mathbf{Y}$ (the labels of the input
tree, called \emph{updatable}).
We will do this by creating 
$\card{\mathbf{Z}}$-copies of each tree node~$n$, to stand for each separate
singleton $\langle Z_i: n \rangle$.
To do this, we will consider the fixed alphabet $\Sigmafix = \{\enu, \upd,
\fix\}$.
Intuitively, $\enu$ will be the label of nodes whose annotation corresponds
to a variable of~$\mathbf{X}$, $\upd$ will be the label of nodes whose annotation
corresponds to a variable of~$\mathbf{Y}$, and $\fix$ will be the label of nodes whose
annotation does not code any variable and should be ignored.
Given a $\Sigmafix$-tree $T$, we will write $T_\enu$, $T_\upd$, and $T_\fix$ to
refer to the set of nodes carrying each label.
We will then consider 
$\overline{\Sigmafix}$-trees, where $\overline{\Sigmafix} \colonequals \Sigmafix \times
\{0, 1\}$, the alphabet of $\Sigmafix$-trees annotated with a Boolean value at
each node: as
promised, each node carries one single value.
Now, a \emph{Boolean annotation} of a $\Sigmafix$-tree $T$ is a function $\nu:T\to\{0,
1\}$, and we see $\nu(T)$ as a $\overline{\Sigmafix}$-tree defined in the expected
way.

We want to rephrase the evaluation of~$A$ on an unlabeled tree~$T$
to a problem on $\Sigmafix$-trees, where variable valuations are
coded in Boolean annotations. This process is formalized in the following lemma,
whose construction is illustrated in Figure~\ref{fig:tobool}; it is analogous to
Lemma~E.2 of~\cite{amarilli2017circuit}:

\begin{lemmarep}
  \label{lem:tobool}
  For any variable set~$\mathbf{Z} = \mathbf{X} \cup \mathbf{Y}$,
  given a $2^{\mathbf{Z}}$-bDTA $A$,
  we can compute a $\overline{\Sigmafix}$-bDTA $A'$ such that the following
  holds: given an unlabeled tree $T$, we can compute in linear time 
  a $\Sigmafix$ tree $T'$ of height $O(h(T))$
  and an injective function $\phi:T\times \mathbf{Z} \to T'$
  such that:
  \begin{itemize}
    \item $T'_\enu$ is exactly the set of nodes $n'$ such that $\phi(n, X_i) =
    n'$ for some $n \in T$ and $X_i \in \mathbf{X}$;
    \item $T'_\upd$ is exactly the set of nodes $n'$ such that $\phi(n, Y_j) =
    n'$ for some $n \in T$ and $Y_j \in \mathbf{Y}$;
    \item $T'_\fix$ is exactly the set of nodes not in the image of~$\phi$, and
      it includes all internal nodes.
  \end{itemize}
  Further, for any
  $\mathbf{Z}$-annotation
  $\nu: T \to 2^{\mathbf{Z}}$, let
  $\nu'$ be the Boolean valuation of~$T'$ defined by:
  \begin{itemize}
    \item If $n'\in T'$ is in the image of~$\phi$, then letting $(n, Z_i)
      \colonequals \phi^{-1}(n')$, we set $\nu'(n') \colonequals 1$ iff $Z_i \in
      \nu(n)$;
    \item If $n'\in T'$ is not in the image of~$\phi$, we set $\nu'(n')
      \colonequals 0$.
  \end{itemize}
  Then $A$ accepts $\nu(T)$ iff $A'$ accepts $\nu'(T')$.
\end{lemmarep}

\begin{figure}
  \begin{tikzpicture}[baseline=(current bounding box.center), level
    distance=.7cm]
    \Tree [.{$n: \{X_1, Y_2\}$} 
      [.{$n_1: \ldots$} ]
      [.{$n_2: \ldots$} ]
      ];
  \end{tikzpicture}
  $\Rightarrow$\hspace{-4em}
  {
  \footnotesize
  \begin{tikzpicture}[baseline=(current bounding box.center), level
    distance=.7cm]
    \Tree [.{$n: \fix, 0$} 
      [.{$n':\fix, 0$} 
        [.{$n''':\fix, 0$} 
          [.{$\phi(n,X_1): \enu, 1$} ]
          [.{$\phi(n,X_2): \enu, 0$} ]
        ]
        [.{$n'''':\fix, 0$} 
          [.{$\phi(n,Y_1): \upd, 0$} ]
          [.{$\phi(n,Y_2): \upd, 1$} ]
        ]
      ]
      [.{$n'': \fix, 0$}
        [.{$n_1: \ldots$} ]
        [.{$n_2: \ldots$} ]
      ]
      ];
  \end{tikzpicture}
  }
  \caption{Illustration of the construction of Lemma~\ref{lem:tobool}, with
  $\mathbf{Z} = (X_1, X_2, Y_1, Y_2)$, on an internal leaf~$n$ of a tree,
  and for a valuation mapping $n$ to $\{X_1, Y_2\}$.
  The
  node $n$ is replaced by a gadget of nodes with fixed labels $\fix$. Two
  $\enu$-labeled descendants indicate the annotation for the $X_i$,
  and two $\upd$-labeled descendants indicate the annotation for the
  $Y_i$.}
  \label{fig:tobool}
\end{figure}

\begin{proof}
  Given an input tree $T$, we change it following the idea of
  Figure~\ref{fig:tobool}: we replace each node $n$ by a gadget of nodes
  labeled with $\fix$, having two subtrees: one whose leaves are labeled $\enu$
  and code the variables $X_i$ in order, and another whose leaves are labeled
  $\upd$ and code the variables $Y_j$ in order.
  This gadget can be completed to a full binary tree by
  adding leaves labeled $\fix$ as necessary. Now we can clearly
  rewrite the $2^{\mathbf{Z}}$-bDTA to a $\overline{\Sigmafix}$-bDTA
  $A'$ which is equivalent in the sense required by the lemma. The
  states of $A'$ consist of the states of~$A$, the pairs of states
  of~$A$, and \emph{annotation} states which consist of binary sequences of
  length up to~$\card{\mathbf{Z}}$.
  The final states are the final states of~$A$. The
  initial function $\iota'$ and transition function $\delta'$ are informally coded as follows.
  The initial function $\iota'$ maps nodes labeled $(\enu, b)$ or $(\fix, b)$ for $b \in
  \{0, 1\}$ 
  to the singleton binary sequence $(b)$ formed of its Boolean value, and it
  maps nodes labeled $(\fix, b)$ for $b \in \{0, 1\}$ to the empty binary
  sequence. The transition function $\delta'$ is defined only on nodes labeled $(\fix, b)$
  for $b \in \{0, 1\}$, because all internal nodes of~$T'$ carry such a label
  (as required); and it is defined as follows (where we ignore the Boolean
  annotation~$b$ of the node):
\begin{itemize}
  \item Given two states $q_1$ and $q_2$ of~$A$, the new state is the pair
    $(q_1, q_2)$;
  \item Given two states that are binary sequences of length $<
    \card{\mathbf{Z}}$, the new state is their concatenation;
  \item Given a binary sequence $s$ of length $\card{\mathbf{Z}}$ and a pair of
    states $(q_1, q_2)$, the new state is the state $\delta(q_1, q_2, s)$
    of~$A$, where $\delta$ is the transition function of~$A$;
  \item Given a binary sequence $s$ of length $\card{\mathbf{Z}}$ and an empty
    binary sequence, the new state is the state $\iota(s)$.
\end{itemize}
  On Figure~\ref{fig:tobool}, the automaton~$A'$ would reach state $(1, 0)$
  on~$n'''$, reach state $(0, 1)$ on~$n''''$ and reach state $(1, 0, 0, 1)$
  on~$n'$. Letting $q_1$ and $q_2$ be the states that $A'$ reaches
  respectively on~$n_1$ and~$n_2$, it reaches state $(q_1, q_2)$ on~$n''$.
  Hence, on node~$n$, it reaches $\delta(q_1, q_2, (1, 0, 0, 1))$. This figure
  illustrates the translation when~$n$ is an internal node with children~$n_1$
  and~$n_2$. The case where~$n$ is a leaf is described in the last bullet point,
  and is analogous: the leaf $n$ in~$T$ is translated to a node $n$ in~$T'$ with one left child
    $n'$ that is the root of the tree describing the valuation of~$\mathbf{Z}$,
    and one right child $n''$ labeled $\fix$ which is a leaf of~$T'$.

  Now, it is easy to show that $A'$ is equivalent to~$A$ in the sense of the
  lemma statement, which concludes the proof.
\end{proof}

We now have a $\overline{\Sigmafix}$-bDTA~$A'$ to run on a $\Sigmafix$-tree $T$. We
can now rephrase our desired provenance result
as a provenance result on such automata. We say that a
hybrid circuit $C$ is a \emph{provenance circuit} of a $\overline{\Sigmafix}$-bDTA
$A'$ on a $\Sigmafix$-tree $T$ if:

\begin{itemize}
  \item The set-valued variables of~$C$ correspond to the nodes of~$T$ with
  label $\enu$, formally, $C_{\avar} = T_\enu$
  \item The Boolean variables of~$C$ correspond to the nodes of~$T$ with label
  $\upd$, formally, $C_{\bvar} = T_\upd$
  \item For any Boolean valuation $\nu$ of~$T$ such that $\nu(n) = 0$ for each
  $n\in T_\fix$, the automaton $A'$ accepts $\nu(T)$ iff, letting $\nu'$ be
  the restriction of~$\nu$ to~$T_\upd$, and letting $B$ be
  the set of nodes of~$T$ corresponding to the restriction of~$\nu$ to~$T_\enu$,
  we have $B \in \nu'(C)$.
\end{itemize}

We can now rephrase our desired result. Note that the statement of this result
implies that our construction is also tractable in the automaton, as we
mentioned in the conclusion (Section~\ref{sec:conclusion}):

\begin{theorem}
  \label{thm:provauto}
  Given a $\overline{\Sigmafix}$-bDTA $A'$
  and a $\Sigmafix$-tree $T$ where all internal nodes are labeled~$\fix$,
  we can compute in time $O(\card{T} \times \card{A'})$
  a hybrid circuit $C$ which is a provenance circuit of~$A'$ on~$T$. Further, $C$
  is a d-DNNF, it is upwards-deterministic, its maximal fan-in is in
  $O(\card{A'})$, and its dependency size is in $O(\card{A'} \times \h(T))$, where
  $h(T)$ is the height of~$T$.
\end{theorem}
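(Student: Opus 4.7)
The plan is to mirror the bottom-up provenance construction of~\cite{amarilli2015provenance} that underpins Theorem~\ref{thm:provorig}, but to emit a hybrid circuit that exposes the labelings as Boolean variables through $\ctimes$-gadgets. Concretely, for every node $n$ of~$T$ and every state $q$ of~$A'$, I will introduce a set-valued gate $g_{n,q}$, maintaining as invariant that, for every valuation $\nu$ of the Boolean variables of the subtree of~$n$, the captured set $\gsetv{\nu}{g_{n,q}}$ is exactly the set of assignments to the $\enu$-leaves of that subtree that drive $A'$ to state $q$ at~$n$ when the $\upd$-leaves are valued according to~$\nu$. The output gate is then a single $\aplus$-gate collecting the $g_{r,q}$ for $r$ the root and $q$ a final state.

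The gadget per node depends on its $\Sigmafix$-label. For an $\enu$-leaf~$n$, the set-valued variable $\langle X_i : n \rangle$ feeds $g_{n, \iota'((\enu,1))}$ and a constant $\atimes$-gate feeds $g_{n, \iota'((\enu,0))}$, combined by an $\aplus$ if the two target states coincide. For an $\upd$-leaf~$n$, I create a Boolean variable $b = \langle l : n \rangle$, its negation $\neg b$, and two $\ctimes$-gates pairing a constant $\atimes$-gate with $b$ and with $\neg b$ respectively; these feed $g_{n, \iota'((\upd,1))}$ and $g_{n, \iota'((\upd,0))}$. For a $\fix$-leaf, only the state $\iota'((\fix,0))$ is reachable, so the corresponding $g_{n,q}$ is a constant $\atimes$-gate and the others are empty $\aplus$-gates. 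For internal nodes (all $\fix$-labeled with forced $\nu$-value~$0$), for each target~$q$ the gate $g_{n,q}$ collects, via an $\aplus$, the decomposable $\atimes$-gates $g_{n_1,q_1} \atimes g_{n_2,q_2}$ for every pair $(q_1,q_2)$ with $\delta'(q_1,q_2,(\fix,0)) = q$, restructured if necessary via auxiliary $\aplus$-gates into a bushy $O(\card{A'})$-fan-in shape.

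The d-DNNF property follows by structural induction. Decomposability of each $\atimes$ is immediate since the set-valued variables of two sibling subtrees are disjoint. Determinism of each $\aplus$ rests on bottom-up determinism of~$A'$: at internal nodes, different pairs $(q_1,q_2)$ correspond to distinct runs of~$A'$ and hence to disjoint assignment sets; at $\upd$-leaves, the two $\ctimes$-branches cover complementary Boolean valuations, so in any fixed $\nu(C)$ only one survives. Upwards-determinism, as in~\cite{amarilli2017circuit_extended}, reduces to noting that each gate $g_{n,q}$ is consumed only inside $\atimes$-gates at its parent's gadget, and at most one such $\atimes$ has a sibling capturing $\{\{\}\}$ under~$\nu$, determined uniquely by the run of~$A'$ on the sibling subtree when all enumerable variables are set to~$0$.

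The remaining quantitative claims are routine accounting. Each node contributes $O(\card{A'})$ gates after fan-in balancing, giving total size and construction time in $O(\card{T} \times \card{A'})$. Wires from gates introduced at~$n$ only reach gadgets of ancestors of~$n$, so $\card{\Delta(g)}$ is bounded by $O(\card{A'})$ gates per ancestor level, i.e., by $O(\card{A'} \times \h(T))$. The main obstacle, and the step I expect to spend the most care on, is verifying the d-DNNF and upwards-determinism conditions \emph{uniformly} across all valuations~$\nu$, since the semantics of a hybrid circuit shifts with~$\nu$; this is exactly why the $\ctimes$-plus-negation gadget at $\upd$-leaves is necessary and why the forced-zero convention at $\fix$-nodes must be threaded through the induction.
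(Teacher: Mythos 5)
Your proposal matches the paper's proof essentially line-by-line: the same per-node, per-state gate family, the same three leaf gadgets (the $\enu$-leaf choice between the set-valued variable and the constant $\atimes$, the $\upd$-leaf $\ctimes$/$\neg$ pair, the $\fix$-leaf forcing the zero-annotation state), the same $\atimes$-then-$\aplus$ pattern at internal nodes indexed by transitions of~$A'$, and the same arguments for determinism and upwards-determinism via determinism of~$A'$ and the unique state on the sibling subtree under the all-zero extension. The only cosmetic divergence is your mention of restructuring $\aplus$-gates into a bushy shape, which the paper does not do because the raw fan-in is already bounded by the transition-table size, hence $O(\card{A'})$; since you hedge with ``if necessary'' this is harmless.
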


\begin{proof}
  We adapt the proof of Proposition~E.8
  from~\cite{amarilli2017circuit_extended}.
  Let us write the $\overline{\Sigmafix}$-bDTA $A' = (Q, F, \iota, \delta)$. 
  We will construct the circuit $C$ in
  a bottom-up fashion from~$T$. We consider every node $n$ of~$T$ with label
  $\lambda(n) \in \Sigmafix$.

  If $n$ is a leaf node, for $b \in \{0, 1\}$, we let $q_b \colonequals
  \iota((\lambda(n), b))$, and we create the following gates in~$C$:
  \begin{itemize}
    \item One set-valued gate~$g_n$ and one set-valued gate $g_{\neg n}$, defined as
      follows:
      \begin{itemize}
        \item If $n \in T_\fix$, the gate $g_n$ is a $\aplus$-gate with no
          inputs (i.e., the annotation at~$n$ is always~$0$), and the gate
          $g_{\neg n}$ is a $\atimes$-gate with no inputs;
        \item If $n \in T_\upd$, the gate $g_n$ is a $\ctimes$-gate of
          a $\atimes$-gate with no inputs and of a Boolean variable gate identified
          to~$n$; and the gate $g_{\neg n}$ is a $\ctimes$-gate of 
          a $\atimes$-gate with no inputs and of the negation of the Boolean
          variable gate previously mentioned;
        \item If $n \in T_\enu$, the gate $g_n$ is a set-valued input gate
          identified to~$n$; and the gate $g_{\neg n}$ is a $\atimes$-gate with
          no inputs.
      \end{itemize}
    \item One $\aplus$-gate $g^q_n$ for each state $q\in Q$ with the following
      inputs:
      \begin{itemize}
        \item If $q = q_1$, the gate $g_n$;
        \item If $q = q_0$, the gate $g_{\neg n}$.
      \end{itemize}
      In particular, if $q \neq q_0$ and $q \neq q_1$, then $g^q_n$ is an
      $\aplus$-gate with no inputs, and if $q_0 = q_1$ then the gate $g^{q_0}_n$
      has both inputs.
  \end{itemize}

  If $n$ is an internal node with children $n_1$ and $n_2$, remembering that
  necessarily $n \in T_\fix$, we create the following gates in~$C$:

  \begin{itemize}
    \item One $\atimes$-gate $g_n^{q_1, q_2}$ for each $q_1, q_2 \in Q$ whose
      inputs are $g^{q_1}_{n_1}$ and $g^{q_2}_{n_2}$;
    \item One $\aplus$-gate $g_n^q$ for each $q \in Q$ whose inputs are all the
      gates~$g^{q_1,q_2}_n$ such that we have $\delta((\lambda(n), 0), q_1, q_2)
      = q$. In particular, if there are no states $q_1, q_2$ such that this
      equality holds, then $g_n^q$ is an $\aplus$-gate with no inputs.
  \end{itemize}

  The output gate $g_0$ is a $\aplus$-gate of the $g^q_{n_\r}$ for all $q \in
  F$, where $n_\r$ is the root of~$T$.

  It is clear that the construction satisfies the requirements of a hybrid
  circuit. It is also clear that the construction obeys the prescribed time
  bounds. The only gates in the construction whose arities are not obviously
  bounded
  are the
  $\aplus$-gates, and they always have at most $O(\card{A'})$-transitions (the
  bound is in the size of the transition table of~$A'$), so the fan-in bound is
  respected. For the dependency size, if we consider an arbitrary gate $g$ of
  the circuit, let $n$ be the node of~$T$ for which it was created. It is clear
  that $\Delta(g)$ is a subset of the set of all gates created for a node $n'$
  which is an ancestor of~$T$ in~$n$. Now, we create $O(\card{A'})$-gates for
  each node of~$T$, so indeed the dependency size is bounded by $O(\card{A'}
  \times h(T))$.
  We must now show that the circuit has the correct semantics, that it is a
  d-DNNF, and that it is upwards-deterministic.

  \medskip
  
  We first show that the
  semantics of the circuit is correct,
  by showing by bottom-up induction the invariant that for any valuation $\nu$
  of~$C$, for all $n \in T$, letting $T_n$ be the subtree of~$T$ rooted at~$n$,
  the set $\gsetv{\nu}{g^q_n}$ precisely denotes
  the set of assignments of $T_\enu \cap T_n$ such that the following holds:
  letting $\nu^n_\enu:
  T_\enu \cap T_n
  \to \{0, 1\}$ be the Boolean function corresponding to the assignment,
  letting $\nu^n_\upd$ be the
  restriction of~$\nu$ to~$T_n \cap T_\upd$ (remember that the domain of~$\nu$ is
  the Boolean gates of~$C$, i.e., $T_\upd$),
  letting $\nu^n_\fix:T_\fix \cap T_n \to \{0, 1\}$ be the constant-$0$ function,
  and letting $\nu^n\colonequals
  \nu^n_\enu
  \cup \nu^n_\upd \cup\nu^n_\fix$ be the valuation of~$T_n$ defined
  from~$\nu^n_\enu$, $\nu^n_\upd$, and $\nu^n_\fix$ in the expected way,
  the automaton $A'_q$ accepts $\nu^n(T_n)$,
  where $A'_q$ is the $\overline{\Sigmafix}$-bDTA obtained from $A'$ by setting
  $q$ as the only final state. This set of assignments is denoted $A'_q(\nu,
  T_n)$ in what follows.

  For the base case of a leaf $n \in T_\fix$, we know that $A'_q(\nu, T_n)$ is the
  empty set for all $q \neq \iota((\lambda(n), 0))$, and that it is the set
  $\{\{\}\}$ otherwise; this is what our construction ensures.

  For the base case of a leaf $n \in T_\upd$, we know that, for all $b \in \{0,
  1\}$, if $\nu(n) = b$, then $A'_q(\nu, T_n)$ is the empty set for all $q \neq
  \iota((\lambda(n), b))$, and that it is the set $\{\{\}\}$ otherwise; again,
  this is exactly what we ensure.

  For the base case of a leaf $n \in T_\enu$, letting $q_b \colonequals
  \iota((\lambda(n), b))$ for all $b \in \{0, 1\}$, we know that
  $A'_{q_0}(\nu, T_n)$
  contains the empty assignment $\{\}$, and $A'_{q_1}(\nu, T_n)$ contains the
  singleton assignment $\{n\}$ (if $q_0 = q_1$ then $A'_{q_0}(\nu, T_n)$ contains both),
  and that is all: this is what we ensure.

  For the induction case, letting $n$ be an internal node of~$T$ with children
  $n_1$ and $n_2$, assuming by induction hypothesis that
  $\gsetv{\nu}{g^{q_1}_{n_1}} = A'_{q_1}(\nu, T_{n_1})$ and
  $\gsetv{\nu}{g^{q_2}_{n_2}} = A'_{q_2}(\nu, T_{n_2})$ for all $q_1, q_2 \in Q$, we know by
  definition that, for all $q \in Q$, the output $A'_q(\nu, T_n)$ consists of the union,
  for $q_1, q_2 \in Q$ such that $\delta((\lambda(n), 0), q_1, q_2) = q$, of the
  relational product of the outputs $A'_{q_1}(\nu, T_{n_1})$
  and $A'_{q_2}(\nu, T_{n_2})$. This is
  because is a bijection between the Boolean labelings of~$T_n$ that are
  accepted by~$A'_q$, and the pairs of Boolean labelings of~$T_{n_1}$ and
  of~$T_{n_2}$ that are respectively accepted by~$A'_{q_1}$ and~$A'_{q_2}$ for
  some $q_1, q_2$ satisfying the condition. Again, this is precisely what we
  compute, so we have shown the invariant.

  Now, as $A'(\nu, T) = \bigcup_{q \in F} A'_q(\nu, T)$, we have established that
  $\gsetv{\nu}{C}$ is correct.

  \medskip

  We now show that $C$ is a d-DNNF. For decomposability, we will show a slightly
  stronger property. Remember that, in the main text, we said that~$C$ is a d-DNNF
  if, for any valuation~$\nu$ of~$C$, the set-valued circuit $\nu(C)$ is a
  d-DNNF. We will instead define decomposability directly on the hybrid
  circuit~$C$. Define the function $\dom$ on~$C$ as follows: for any set-valued gate~$g$ of~$C$, we denote by
  $\dom(g)$ the set of set-valued variable gates having a directed path to~$g$
  in~$C$. We now say that an $\atimes$-gate $g$ of~$C$ is \emph{decomposable} if it has
  no inputs or if, letting $g_1$ and~$g_2$ be its two inputs, the sets
  $\dom(g_1)$ and~$\dom(g_2)$ are disjoint. We then call~$C$ \emph{decomposable} if
  this holds for every $\atimes$-gate $g$ of~$C$. Let us show that $C$ is
  decomposable in this sense, which clearly implies that
  $\nu(C)$ is decomposable for every valuation~$\nu$ of~$C$.
  Now, the only
  $\atimes$-gates with two inputs are the $g^{q_1,q_2}_n$, whose inputs are
  $g^{q_1}_{n_1}$ and $g^{q_2}_{n_2}$ for the two children $n_1, n_2$ of~$n$.
  Now, an immediate bottom-up induction shows that for any node $n'$ of~$T$ and
  state~$q$, we have $\dom(g^q_n) \subseteq \{n' \in T_\enu \cap T_n\}$. Hence, indeed, the domains are disjoint.

  For determinism, the $\aplus$-gates created for the leaf nodes $n \in T_\upd$ may have two
  inputs, but in this case, the sets that they capture are clearly disjoint for
  any valuation~$\nu$, because one is always empty depending on~$\nu(n)$. For
  the $\aplus$-gates $g^q_n$ created for a state $q$ of~$A'$ and an internal node
  $n$ of~$T$ with children $n_1$ and~$n_2$, assume by contradiction that there
  is a valuation $\nu$ of~$C$ and some assignment~$a$ such that
  $a \in \gsetv{\nu}{g^{q_1,q_2}_n}$  and
  $a \in \gsetv{\nu}{g^{q_1',q_2'}_n}$  for $(q_1,q_2) \neq (q_1',q_2')$. Assume that
  $q_1 \neq q_1'$, the case $q_2 \neq q_2'$ is analogous.
  By our inductive invariant and the construction of the circuit, 
  and by the definition of the output of automata,
  we know that for the valuation $\nu'$ of~$T_n$ defined from~$a$ and~$\nu$, 
  the automata $A'_{q_1}$ and $A'_{q_1'}$ both accept $\nu'(T_{n_1})$. This
  contradicts the determinism of~$A'$, so we have a contradiction. Hence, $g^q_n$ is
  deterministic. The last gate to consider is the output gate~$g_0$, but if it
  has two different inputs $g^{q_1}_{n_\r}$ and $g^{q_2}_{n_\r}$ (for~$n_\r$ the
  root of~$T$) such that some assignment~$a$ belongs both to
  $\gsetv{\nu}{g^{q_1}_{n_\r}}$ and to
  $\gsetv{\nu}{g^{q_2}_{n_\r}}$, then the inductive invariant shows that
  the automata $A'_{q_1}$ and $A'_{q_2}$ both accept $\nu'(T)$, for~$\nu'$ the
  valuation defined from~$a$ and~$\nu$, contradicting again the determinism of~$A'$.
  We have thus shown that $C$ is a d-DNNF.

  \medskip

  We must last show that $C$ is upwards-deterministic. We adapt the argument of
  Claim~F.3 of~\cite{amarilli2017circuit_extended}. As $A'$ is deterministic,
  each gate of the form $g^{q_1,q_2}_n$ is used as input to only one gate
  $g^q_n$, namely, the one defined according to the transition function; and all
  set-valued gates introduced at leaves of~$T$ are also used as inputs to only
  one gate, except the~$g^q_n$. So the
  only set-valued gates in the construction which are used as inputs to multiple
  gates are the $g^q_n$ when $n$ is a leaf of~$T$ or an internal node of~$T$
  which is not the root. Fix a valuation $\nu$
  of~$C$. Let $n'$ be the parent of~$n$, and assume that $n = n_1$ is the first
  child of~$n'$; the other case is symmetric. Let $n_2$ be the other child
  of~$n'$. Now, the gate $g^q_n$ is used as inputs to gates of the form
  $g^{q,q_2}_{n'}$ for~$q_2 \in Q$, and the other input of these gates is $g^{q_2}_{n_2}$. Let $\nu'$ be the
  extension of~$\nu$ obtained by labeling all nodes of~$T_\enu \cap T_n$ and
  of~$T_\fix \cap T_n$ with~$0$: it is
  a valuation of~$T$. Now, by determinism of~$A'$, we know that there is exactly
  one state $q'$ such that $A'_{q'}$ accepts $\nu'(T_{n_2})$. Hence, by our inductive
  invariant, the only $g^{q_2}_{n_2}$ such that $\{\} \in
  \gsetv{\nu}{g^{q_2}_{n_2}}$ is
  $g^{q'}_{n_2}$; so for~$\nu$ there is exactly one pure outgoing wire
  connecting $g^q_n$ to another gate, namely, the one connecting it to
  $g^{q,q'}_{n'}$. This shows that $g^q_n$ is upwards-deterministic. Hence, we
  have shown that $C$ is upwards-deterministic. This concludes the proof.
\end{proof}

We can now recap the proof of Theorem~\ref{thm:provenance}:

\begin{proof}[Proof of Theorem~\ref{thm:provenance}]
Given the MSO query $Q(\mathbf{X})$ on $\Gamma$-trees, 
writing $\mathbf{X} = X_1, \ldots, X_m$, 
define a query
$Q'(\mathbf{X}, \mathbf{Y})$ on unlabeled trees as we explained initially, and
write $\mathbf{Z} = \mathbf{X} \cup \mathbf{Y}$. Use Lemma~\ref{lem:automata} to
compute a $2^{\mathbf{Z}}$-bDTA $A$ such that $Q'(T) = A(T)$. Now, use
Lemma~\ref{lem:tobool} to compute the $\overline{\Sigmafix}$-bDTA $A'$.
All of this is independent from the input tree.

Now, when we are given the unlabeled tree~$T$ as input, we compute in linear
time the $\Sigmafix$-tree~$T'$ and the injective function~$\phi$
  described in Lemma~\ref{lem:tobool}. Now, we use
Theorem~\ref{thm:provauto} to compute a provenance circuit $C'$ of~$A'$ on~$T'$.
We know that $C'$ is a d-DNNF, that it is upwards-deterministic, and that its
maximal fan-in depends only on~$A'$, so it is constant. Further, its dependency
size is in $O(\h(T') \times \card{A'})$, i.e., it is in $O(\h(T))$ because
$\h(T') = O(\h(T))$. Now, let us relabel the inputs of~$C'$: for every $g \in
C'_{\avar}$, remembering that it corresponds to a node $n' \in T'_\enu$, letting
$(n, X_i) \colonequals \phi^{-1}(n')$, we relabel $g$ to~$\langle X_i:
n\rangle$. We also relabel every $g \in C_{\bvar}$ to $\langle Y_j: n\rangle$ in the
same way. Let $C$ be the result of this renaming on~$C'$ ; as this
renaming is bijective, $C$ is still an upwards-deterministic d-DNNF and the
  dependency size and maximal fan-in is unchanged.

  To show that the circuit $C$ is correct,
  remember that a labeling assignment $\alpha$ of~$T$
  is a set of singletons $\langle l: n\rangle$ with $l \in \Gamma$, 
  which we can see as a set of pairs $\langle Y_i: n\rangle$ because
  $\mathbf{Y}$ corresponds to~$\Gamma$.
  We must show that for every labeling assignment $\alpha_\upd$,
  letting $\nu_{\alpha_\upd}$ be the Boolean valuation of~$C_{\bvar}$ defined
  from~$\alpha_\upd$, and $\lambda_{\alpha_\upd}$ be the $\mathbf{Y}$-annotation of~$T$ defined
  from~$\alpha_\upd$, then the set of assignments captured by~$C$ under
  $\nu_{\alpha_\upd}$
  is exactly the output of~$Q$ on~$\lambda_{\alpha_\upd}(T)$.
  Let $\alpha_\upd$ be such a labeling assignment, and let us show the claim.
  Let $\alpha_\upd'$ be the subset of~$T'_\upd$ obtained as the image
  of~$\alpha_\upd$ via
  the mapping~$\phi$ of Lemma~\ref{lem:tobool}, i.e., $\alpha_\upd' \colonequals \{n'
  \in T'_\upd \mid \langle Y_j: n \rangle \in \alpha_\upd \text{~where~}
  (n, Y_j) \colonequals \phi^{-1}(n')\}$. 
  Let~$\nu'_{\alpha_\upd'}$ be the Boolean valuation of~$T'_\upd$ defined
  from the subset~$\alpha_\upd'$ of~$T'_\upd$. 
  By definition of $C'$ being a provenance circuit of~$A'$, we know that
  $\nu'_{\alpha_\upd'}(C')$
  is exactly the set of subsets $\alpha_\enu'$ of~$T'_\enu$ such that,
  letting $\nu'_{\alpha_\enu'}$ be the Boolean valuation of~$T'_\enu$ corresponding
  to the subset~$\alpha_\enu'$ of~$T'_\enu$, 
  letting $\nu'_\fix$ be the Boolean valuation of~$T'_\fix$ mapping every node to~$0$,
  letting $\nu' \colonequals \nu'_{\alpha_\upd'} \cup \nu'_{\alpha'_\enu} \cup \nu'_\fix$
  the automaton $A'$ accepts $\nu'(T')$. 
  This last condition is equivalent, by the statement of Lemma~\ref{lem:tobool},
  to saying that~$A$ accepts $\lambda_{\alpha_\upd,\alpha_\enu}(T)$, which maps
  every $n \in T$ to the union of $\lambda_{\alpha_\upd}(n)$ and of
  $\lambda_{\alpha_\enu}(n) \colonequals \{X_i \in \mathbf{X} \mid
  \nu_{\alpha'_\enu}(\phi(n, X_i)) = 1\}$.
  This is equivalent to saying that $A$ accepts the result of
  annotating~$\lambda_{\alpha_\upd}(T)$ by the $\mathbf{X}$-annotation
  $\lambda_{\alpha_\enu}$, so by
  definition of~$A$ it is equivalent to saying that $\alpha_\enu$
  is an answer to~$Q$ on~$\lambda_{\alpha_\upd}(T)$.
  So, to summarize, we know that $\nu'_{\alpha_\upd}(C')$ is exactly the set of
  subsets $\alpha_{\enu'}$ of~$T'_\enu$ such that the corresponding $\alpha_{\enu}$ is in
  the output of~$Q$ on~$\lambda_{\alpha_\upd}(T)$.
  Thanks to the renaming that we performed from~$C'$ to~$C$, we know that
  $\nu'_{\alpha_\upd}(C)$ is exactly the output of~$Q$ on~$\lambda_{\alpha_\upd}(T)$, which
  establishes correctness, and concludes the proof.
\end{proof}

\end{toappendix}

\begin{toappendix}
  \subsection{Proof of the Homogenization Lemma}
  \label{apx:prfhomogenize}
  In this appendix, we prove Lemma~\ref{lem:homogenize}:
\end{toappendix}

\subparagraph*{Homogenization.}
We will make enumeration simpler by 
imposing one last requirement on hybrid circuits.
A hybrid circuit $C$ is \emph{homogenized} if there is no valuation
$\nu$ of~$C$ and set-valued gate $g$ of~$C$ such that $\{\} \in \gsetv{\nu}{g}$.
Note that the requirement does not apply to the Boolean gates of~$C$, nor to
the gates that replace them in evaluations~$\nu(C)$ of~$C$, so it equivalently
means that $C$ does not contain $\atimes$-gates with no inputs.
Intuitively, set-valued gates in~$C$ that capture the empty assignment would waste time in
the enumeration.
We will show that we can rewrite circuits in
linear time to make them homogenized, while preserving our requirements; but
we need to change our definitions slightly to ensure that the circuit
can still capture the empty assignment overall. To do so, we add the
possibility of distinguishing a Boolean gate $g_1$ of a hybrid circuit $C$ as
its \emph{secondary output};
in this case, given a
valuation $\nu$ of~$C$, the set $\gsetv{\nu}{C}$
captured by~$C$ under~$\nu$ is $\gsetv{\nu}{C}$ plus the empty assignment
$\{\}$ if the secondary output $g_1$ evaluates to~$1$, i.e., if $\bval{\nu}{g_1} = 1$.
We say that two hybrid circuits $C$ and $C'$ (with or without secondary outputs)
are \emph{equivalent} if
$C_\bvar = C'_\bvar$, $C_\avar = C'_\avar$, and for any valuation $\nu$
of~$C$, we have $\gsetv{\nu}{C} = \gsetv{\nu}{C'}$.
We then have:

\begin{lemmarep}
  \label{lem:homogenize}
  For any hybrid circuit $C$,
  we can build in linear time a hybrid circuit
  $C'$ with a secondary output $g_1$, such that $C'$ is homogenized and it is equivalent to~$C$. Further, if $C$ is a
  d-DNNF and is upwards-deterministic, then so is~$C'$; if $C$ has bounded fan-in
  then the same holds of~$C'$; and we have $\Delta(C') = O(\Delta(C))$.
\end{lemmarep}

\begin{proofsketch}
  This is shown analogously to homogenization
in~\cite{amarilli2017circuit}, which follows the
  technique of Strassen~\cite{Strassen73} (only done
  for two ``layers'', namely, empty and non-empty assignments).
\end{proofsketch}

\begin{proof}
  We first describe the construction. We will inductively rewrite each
  set-valued
  gate $g$ of~$C$ to two gates $g_{=0}$ and $g_{>0}$ of~$C'$, to preserve the
  following invariant.
  First, the gate $g_{>0}$ will be set-valued and ensure that, 
  for
  any valuation $\nu$ of~$C$, we have
  $\gsetv{\nu}{g_{>0}} = \gsetv{\nu}{g} \setminus \{\{\}\}$.
  Second, the gate $g_{=0}$ will be Boolean and ensure that, for any
  valuation $\nu$ of~$C$, we have $\bval{\nu}{g_{=0}} = 1$ iff $\{\} \in
  \gsetv{\nu}{g}$.
  We first copy all Boolean gates of~$C$ as-is in~$C'$, so in
  particular their evaluation is always the same in~$C$ and in~$C'$. The precise
  construction is then the following:
  \begin{itemize}
    \item For the base case on a set-valued variable gate $g$, we identify
  $g_{>0}$ to~$g$, and we let $g_{=0}$ be a $\lor$-gate with no inputs, so that
  $\bval{\nu}{g_{=0}} = 0$ for each valuation~$\nu$.

\item For the base case on a $\aplus$-gate $g$ with no inputs, we define $g_{=0}$
  to be a $\lor$-gate with no inputs, and let
  $g_{>0}$ be a $\aplus$-gate with no inputs.

\item For the base case on a $\atimes$-gate $g$ with no inputs, we define
  $g_{=0}$ to be a $\land$-gate with no inputs (so that it always evaluates
  to~$1$), and define $g_{>0}$ to be a $\aplus$-gate with no inputs.

\item For the induction case on a $\ctimes$-gate $g$, letting $g'$ be its set-valued
  input and $g''$ its Boolean input, letting $g'_{=0}$ and $g'_{>0}$ be the gates
  of~$C'$ obtained by induction for~$g'$, we define $g_{>0}$ as
      a $\ctimes$-gate of~$g''$ and $g'_{>0}$, and define $g_{=0}$ as 
      an $\land$-gate of~$g''$ and $g'_{=0}$.

\item For the induction case on a $\aplus$-gate $g$, let $g^1, \ldots, g^n$ be
  its inputs, with $n$ being a constant. We define $g_{=0}$ as an $\lor$-gate
  of~$g^1_{=0}, \ldots, g^n_{=0}$, and $g_{>0}$ as a $\aplus$-gate
  of~$g^1_{>0}, \ldots, g^n_{>0}$.

\item For the induction case on a $\atimes$-gate $g$, let $g^1$ and $g^2$ be its
  two inputs. We let $g_{=0}$ be a $\land$-gate of $g^1_{=0}$ and
  $g^2_{=0}$. We let $g_{>0}$ be a $\aplus$-gate of:
  \begin{itemize}
    \item a $\atimes$-gate of $g^1_{>0}$ and $g^2_{>0}$
    \item a $\ctimes$-gate of $g^1_{=0}$ and $g^2_{>0}$
    \item a $\ctimes$-gate of $g^1_{>0}$ and $g^2_{=0}$
  \end{itemize}
\end{itemize}
  Finally, we let the output gate $g_0$ of~$C'$ be $(g_0)_{>0}$, and let the secondary
  output gate $g_1$ of~$C'$ be $(g_0)_{=0}$.

  \medskip

  It is easy to check that $C'$ satisfies the conditions on hybrid circuits,
  and that the invariant is verified, so that $C'$ is indeed equivalent
  to~$C$. Further, the invariant ensures that no set-valued gate of~$C'$
  captures $\{\}$ under some valuation, so $C'$ is indeed homogenized.

  It is clear that the construction is in linear time.
  It is also clear that the maximum fan-in of~$C'$ is no bigger than that of~$C$
  (unless it is less than~$3$, in which case it is~$3$).
  Further, it is clear that if there is a directed path from a gate $g_1'$ to a
  gate $g_2'$ in~$C'$, then letting $g_1$ and $g_2$ be the gates from which
  $g_1$ and $g_1'$ were created, there is a directed path from~$g_1$ to~$g_2$
  in~$C$. As we create only constantly many gates in~$C'$ for each gate
  of~$C$, this ensures that the dependency of~$C'$ is at most that of~$C$
  multiplied by a constant.

  \medskip

  We now show that, if $C$ is a d-DNNF, then $C'$ also is. 
  Specifically, letting $\nu$ be a valuation of~$C$, we show that if~$\nu(C)$ is a
  d-DNNF then so is~$\nu(C')$.
  This is like in
  Proposition~B.3 of~\cite{amarilli2017circuit_extended} except it is
  simpler in our context because there are no range gates and the arity
  bounds are more convenient.
  The only $\atimes$-gates in~$\nu(C')$ are those 
  created in the
  first bullet point of the list for the last induction case: now as
  $\dom(g^1)$ and $\dom(g^2)$ are disjoint in~$\nu(C)$ because $g$ is
  decomposable, and as we clearly have by
  construction that $\dom(g^1_{>0}) = \dom(g_1)$ (identifying $g'$ and $g_{>0}'$
  for $g' \in C_\avar$) and likewise $\dom(g^2_{>0}) = \dom(g_2)$,
  we can conclude. For determinism, we need to consider
  first the induction case for $\aplus$, and second the induction case for
  $\atimes$. For $\aplus$, the determinism of the $\aplus$-gate $g_{>0}$
  follows from that of~$g$ in~$\nu(C)$. For $\atimes$, the gate $g_{>0}$ is indeed
  deterministic because:
  \begin{itemize}
    \item Letting $g'$ be its first input, as $C'$ is homogenized, each
      answer in $\gsetv{\nu}{g'}$  must contain one variable from $\dom(g^1_{>0})$
      and one from $\dom(g^2_{>0})$
    \item Letting $g''$ be its second input, the answers in $\gsetv{\nu}{g''}$
      contain only variables from $\dom(g^1_{>0})$
    \item Letting $g'''$ be its second input, the answers in $\gsetv{\nu}{g'''}$
      contain only variables from $\dom(g^2_{>0})$
  \end{itemize}
  We conclude that the answers are indeed disjoint. Hence, $g_{>0}$ is indeed
  deterministic, which establishes that $C'$ is a d-DNNF.

  \medskip

  Last, we show that if $C$ is upwards-deterministic then so is~$C'$;
  specifically, we show that for any valuation $\nu$ of~$C$, if $\nu(C)$ is
  upwards-deterministic then so is~$\nu(C')$. The
  proof is analogous to that of Claim~F.7
  in~\cite{amarilli2017circuit_extended}. First note that the gates added when
  evaluating $C'$ to~$\nu(C')$ cannot break upwards-determinism because they are
  used as the input to only one gate (the replacement of a $\ctimes$-gate), so
  we can ignore them. Like in the proof of~\cite{amarilli2017circuit_extended},
  we define the
  \emph{original gate} $\omega_\nu(g)$ of any of the other
  set-valued gates $g$ of~$\nu(C')$ to be the gate
  in~$\nu(C)$ for which it was created: specifically, if $g$ is of the form
  $(g')_{>0}$ then $\omega_\nu(g) \colonequals g'$, and if $g$ is a fresh
  gate created for a $\aplus$-gate $g'$ of~$\nu(C)$ in the last induction case
  above, then $\omega_\nu(g) \colonequals g'$. Clearly the gates
  in~$\nu(C')$ that come from fresh gates in~$C'$
  cannot violate upwards-determinism, because they are used as input to only
  one gate in~$C'$, hence in~$\nu(C')$.
  So it suffices to show that, for any gate $g$ of~$\nu(C)$, the
  gate $g_{>0}$ in~$\nu(C')$ is upwards-deterministic.
  We want to show that there is at most one gate $g'$ in~$\nu(C')$ such
  that the wire $(g_{>0}, g')$ is pure in~$\nu(C')$. First observe that, if $g'$ is a
  fresh gate from the first sub-item in the last induction case, then clearly this
  wire is not pure, because the other input is a set-valued gate and~$C$ is
  homogenized so the other input cannot capture the empty assignment (neither
  in~$C'$ nor in~$\nu(C')$).
  Hence, we can exclude these wires from consideration. Now,
  in fact, for
  any wire $w = (g_{>0}, g')$ in~$C'$, then
  $\omega_\nu(w) \colonequals (g, \omega_\nu(g'))$ is also a wire of~$\nu(C)$, and this mapping is
  injective: there are no two wires $w \neq w'$ such that $\omega_\nu(w) =
  \omega_\nu(w')$. Indeed, for each gate $g$ of~$C$ and outgoing wire of~$g$
  in~$\nu(C)$, we create at most one wire from $g_{>0}$ to a gate of~$\nu(C')$ among
  the wires that remain at this stage.
  Hence, using the
  upwards-determinism of~$g$ in~$\nu(C)$ 
  it suffices to show that whenever a wire $w$ of~$C'$ was not excluded yet and is
  pure, then $\omega_\nu(w)$ also is.

  To do so, we consider the possible wires $w = (g_{>0}, g')$ in~$\nu(C')$:
  \begin{itemize}
    \item If $g'$ is a $\aplus$-gate, then $\omega_\nu(g')$ also was, which
      concludes.
    \item If $g'$ is a $\atimes$-gate coming from a $\atimes$-gate
      of~$C'$, we have already excluded these wires.
    \item If $g'$ is the translation in~$\nu(C')$ (as a $\atimes$-gate)
      of a $\ctimes$-gate of~$C'$ created in the induction case
      for~$\ctimes$, then its other input is the translation of a Boolean gate
      which existed also in~$\nu(C)$ and had the same value,
      so the wire is pure iff the corresponding wire
      is pure in~$\nu(C)$
    \item If $G'$ is the translation in~$\nu(C')$ of a $\ctimes$-gate of~$C'$ created in the induction case
      for~$\atimes$, then its second input in~$\nu(C')$ captures the empty
      assignment iff it stands in~$C'$
      for a $\atimes$-gate of the form $g''_{=0}$ which
      evaluates to true under~$\nu$, i.e., iff the original gate~$g''$ captured the
      empty set in~$\nu(C)$, so again we have an equivalence.
  \end{itemize}
  This concludes the proof of preservation of upwards-determinism, and
  concludes the proof.
\end{proof}

Hence, up to linear-time processing, we can additionally assume that the
circuits of Theorem~\ref{thm:provenance} are homogenized. We can now use this
theorem, the lemma above, and Lemma~\ref{lem:balancing}, to reduce enumeration for MSO on trees (as in our main theorem,
Theorem~\ref{thm:main}) to the task of enumerating the set captured by a hybrid circuit
satisfying some structural properties. The result that we need is the following
(we prove it in the next two sections):

\begin{toappendix}
  \subsection{Proof of the Main Circuit Theorem}
  \label{apx:prfcircuits}
\end{toappendix}

\begin{theoremrep}
  \label{thm:circuits}
  Given an upwards-deterministic, d-DNNF, homogenized hybrid circuit~$C$ with
  constant fan-in, given an initial Boolean valuation $\nu$
  of~$C_{\bvar}$, there is an enumeration algorithm with linear-time preprocessing to
  enumerate the set $\gsetv{\nu}{C}$ captured by~$C$ under~$\nu$, with linear delay and
  memory in each produced assignment, and with update time 
  in~$O(\Delta(C))$: an \emph{update} consists here of toggling one value
  in~$\nu$.
\end{theoremrep}

\begin{proof}
  See Appendix~\ref{apx:together} for the proof of this result.
\end{proof}

\section{Enumerating Assignments of Hybrid Circuits}
\label{sec:enumeration}
In this section and the next, we 
prove Theorem~\ref{thm:circuits} by giving
an algorithm for enumeration under updates.
We start by describing the preprocessing
phase, computing two simple structures: a \emph{shortcut function} and a
\emph{partial evaluation}; we also explain how this index can be efficiently
updated. We then describe an algorithm for the enumeration phase, which needs an
additional index structure to achieve the required delay. We close the section by
presenting the missing index, called a \emph{switchboard}. The switchboard must
support a kind of reachability queries with a specific algorithm for enumeration
under updates: we give a self-contained presentation of this scheme in the next
section.

\subparagraph*{Preprocessing phase: shortcuts and partial evaluation.}
The first index structure that we precompute on our hybrid
circuit~$C$
consists of a \emph{shortcut function}
to avoid wasting time in chains of $\ctimes$-gates.
For each $\ctimes$-gate $g$, we precompute the one set-valued gate, called
$\delta(g)$ which is not
a $\ctimes$-gate and which has a directed path to~$g$ going only through
$\ctimes$-gates. 
The function $\delta$ can clearly be computed in a
linear-time bottom-up pass during the preprocessing, and it will never need to
be updated (it does not depend on~$\nu$). For notational convenience, we extend
$\delta$ by setting $\delta(g) \colonequals g$ for any set-valued gate~$g$ which
is not a $\ctimes$-gate.

The second index structure that we precompute is a \emph{partial
evaluation}, which depends on the
valuation $\nu$: it is a function $\omega_\nu$
from the gates of~$C$ to~$\{0, 1\}$ satisfying the following:
\begin{itemize}
  \item For every Boolean gate $g$, we have $\omega_\nu(g) = \bval{\nu}{g}$.
  \item For every set-valued gate $g$, we have $\omega_\nu(g) = 1$ iff
    $\gsetv{\nu}{g}$ is
    non-empty.
\end{itemize}
The function $\omega_\nu$ is intuitively an evaluation of the Boolean gates in
the circuit, extended to the set-valued gates to determine whether their set is
empty or not. We can easily compute~$\omega_\nu$ bottom-up 
from~$\nu$. Further, whenever $\nu$ is changed on a Boolean variable gate~$g$,
we can update $\omega_\nu$ by recomputing it bottom-up on~$\Delta(g)$. Formally:

\begin{lemmarep}
  \label{lem:omega}
  Given a hybrid circuit~$C$ of constant fan-in, given a valuation $\nu$ of~$C$, we can compute
  $\omega_\nu$ in linear time from~$\nu$ and~$C$. Further, for any $g \in
  C_\bvar$, letting $\nu'$ be the result of toggling the value of~$\nu$ on~$g$,
  we can update $\omega_\nu$ to $\omega_{\nu'}$ in time $O(\Delta(g))$.
\end{lemmarep}

\begin{proof}
  We explain how to compute $\omega_\nu$ bottom-up in linear time,
  in a way which is clearly correct by induction:
  \begin{itemize}
    \item For a Boolean variable gate $g$, we set $\omega_\nu(g) \colonequals \nu(g)$.
    \item For $\lor$-gates, $\land$-gates, and $\neg$-gates,
      we compute $\omega_\nu(g)$ from the
      $\omega_\nu$-value of the input gates with the Boolean operation indicated
      in the gate type.
    \item For a set-valued variable gate $g$, we set $\omega_\nu(g) \colonequals 1$.
    \item For a $\ctimes$-gate $g$, letting $g'$ and $g''$ be its two inputs, we
      set $\omega_\nu(g) \colonequals \omega_\nu(g') \land \omega_\nu(g'')$.
    \item For a $\atimes$-gate $g$ with two inputs $g'$ and $g''$, we
      set again $\omega_\nu(g) \colonequals \omega_\nu(g') \land \omega_\nu(g'')$.
    \item For a $\aplus$-gate $g$, letting $g_1, \ldots, g_n$ be its inputs, we
      set $\omega_\nu(g) \colonequals \bigvee_i \omega_\nu(g_i)$.
  \end{itemize}
  For updates, whenever $\nu$ is toggled on $g \in C_\bvar$, it is easy to see
  that, for any gate $g'$ of~$C$, if $\omega_\nu(g') \neq \omega_{\nu'}(g')$
  then $g' \in \Delta(g)$. Hence, we can update $\omega_\nu$ to $\omega_{\nu'}$
  within the prescribed time bound simply by taking $\omega_{\nu'}$ to be
  $\omega_\nu$ initially, and then recomputing $\omega_{\nu'}$
  on~$\Delta(g)$ according to the above scheme: this uses the fact that $C$ has
  constant fan-in.
\end{proof}

Hence, we can compute $\omega_\nu$ and $\delta$ in the
preprocessing and maintain them under updates.

\subparagraph*{Enumeration phase.}
We can use the shortcut function and partial evaluation to
enumerate the assignments in the set $\gsetv{\nu}{C}$ of our hybrid circuit~$C$.
Of course, if $\omega_\nu(g_0) = 0$ then we detect in constant time that there
is nothing to enumerate.
Otherwise, the enumeration scheme proceeds
essentially like in~\cite{amarilli2017circuit}; to achieve the right delay
bounds, it will need an additional index that we will present later.
We start by enumerating $\gsetv{\nu}{g_0}$, and describe what happens when we try
to enumerate $\gsetv{\nu}{g}$ for a set-valued gate~$g$; we will always ensure
that $\omega_\nu(g) = 1$. 
The base case is when $g$ is a set-valued variable, in which case the only
assignment to enumerate is~$\{g\}$. There are three induction cases:
$\atimes$-gates, $\ctimes$-gates, and $\aplus$-gates.

First, assume that $g$ is a $\atimes$-gate. As $C$ is homogenized, $g$ has two inputs $g_1$ and
$g_2$. Then we have $\gsetv{\nu}{g} = \gsetv{\nu}{g_1} \relprod \gsetv{\nu}{g_2}$. 
Hence, we can
simply enumerate $\gsetv{\nu}{g}$ as the lexicographic product
of~$\gsetv{\nu}{g_1}$
and $\gsetv{\nu}{g_2}$.
In particular, as 
$\omega_\nu(g) = 1$, we have $\omega_\nu(g_1) =
\omega_\nu(g_2) = 1$, so neither set is empty.
Formally, we have the
following lemma:

\begin{lemmarep}
  \label{lem:atimes}
  For any $\atimes$-gate $g$ with inputs $g_1$ and $g_2$, if we can enumerate
  $\gsetv{\nu}{g_1}$ and $\gsetv{\nu}{g_2}$ with delay and memory respectively
  $\theta_1$ and $\theta_2$, then we can
  enumerate $\gsetv{\nu}{g}$ with delay and memory $\theta_1 + \theta_2 + c$ for
  some constant~$c$.
\end{lemmarep}

\begin{proof}
  We enumerate all assignments for~$g_1$, which is non-empty because
  $\omega_\nu(g_1) = 1$; further, every assignment is
  non-empty because $C$ is homogenized.
  For each assignment $a$, we enumerate all assignments for~$g_2$, again a non-empty
  set of non-empty assignments, and for each assignment $a'$, we return the
  assignment $a
  \cup a'$, where the union is disjoint thanks to the determinism of~$g$.
  This satisfies the delay bound.
  The bound on memory usage is also satisfied,
  because we only need to remember the state in the enumeration on~$g_1$
  and~$g_2$ as well as a pointer to~$g$.
\end{proof}
Note that the constant~$c$ paid at the $\atimes$-gate is not a
problem to achieve linear delay and memory, because it is paid at most~$n-1$ times
when enumerating an assignment $A$ of size~$n$. Indeed, $C$ is
homogenized, so $A$ is always split non-trivially at each $\atimes$-gate, 
and $g$ is decomposable in~$\nu(C)$, so the
two sub-assignments never share any variable.

Second, assume that $g$ is a $\ctimes$-gate. As $\omega_\nu(g) = 1$, we clearly
have $\gsetv{\nu}{g} = \gsetv{\nu}{\delta(g)}$. Hence, we can simply follow the
pointer to~$\delta(g)$ and enumerate $\gsetv{\nu}{\delta(g)}$.
Intuitively, the cost of this operation can be covered by that of $g$,
because $\delta(g)$ can no longer be a $\ctimes$-gate.

\begin{lemmarep}
  \label{lem:ctimes}
  For any $\ctimes$-gate $g$, if we can enumerate
  $\gsetv{\nu}{\delta(g)}$ with delay and memory $\theta$, then we can
  enumerate $\gsetv{\nu}{g}$ with delay and memory $\theta+c$ for some
  constant~$c$.
\end{lemmarep}

\begin{proof}
  There is nothing to explain beyond what is given in the main text before the
  lemma statement.
\end{proof}

Third, assume that $g$ is a $\aplus$-gate $g$.
Naively, we can enumerate $\gsetv{\nu}{g}$
as the union of the $\gsetv{\nu}{g'}$ for the inputs $g'$ of~$g$ for which
$\omega_\nu(g') = 1$ (this union is disjoint thanks to determinism).
This is correct, but does not satisfy the delay bounds, because $g'$ may be another
$\aplus$-gate.
A more clever scheme is to 
to ``jump'' to the
$\atimes$-gates or set-valued variable gates on which $\gsetv{\nu}{g}$ depends.
Let us accordingly call \emph{exits}
the gates of these two types. The set $\gsetv{\nu}{g}$ can then be expressed
as a union of~$\gsetv{\nu}{g'}$ for the exits~$g'$
that have a directed path of $\aplus$-gates and $\ctimes$-gates to~$g$.
We introduce definitions to ``collapse'' these paths.

The first definition collapses paths of $\ctimes$-gates. There is a
\emph{$\ctimes$-path} from a set-valued gate $g'$ to a set-valued gate~$g \neq
g'$, written $g' \opath g$, if there is a
directed path $g' = g_1 \rightarrow \cdots \rightarrow g_n = g$ in~$C$ such
that $g_2, \ldots, g_{n-1}$ are all $\ctimes$-gates. In particular, a wire $(g',
g)$ between set-valued gates implies $g' \opath g$ (take $n=2$), and $\delta(g)
\opath g$ whenever $\delta(g) \neq g$.
When $g$ is a $\aplus$-gate, there are
two cases, depending on~$\nu$.
First, we may have $\omega_\nu(g_{n-1}) = 1$, 
and then 
$\omega_\nu(g') = 1$
and $\gsetv{\nu}{g'}$ contributes to
$\gsetv{\nu}{g}$: we call the path
\emph{live
under~$\nu$}.
Second, we may have
$\omega_\nu(g_{n-1}) = 0$, and then
$\gsetv{\nu}{g'}$ does not contribute
to~$\gsetv{\nu}{g}$ via this path.

The second definition collapses
paths of $\aplus$-gates. An \emph{$\aplus$-path} from a set-valued gate $g'$ to a
set-valued
gate~$g \neq g'$ is a sequence $g' = g_1 \opath \cdots \opath g_n = g$ in~$C$,
where $g_2, \ldots, g_{n-1}$ are all $\aplus$-gates and there is a $\ctimes$-path between any two consecutive gates.
The path is 
\emph{live under~$\nu$} if there is a \emph{live} $\ctimes$-path under~$\nu$ between
any two consecutive gates.

We now use these definitions to express $\gsetv{\nu}{g}$ as a function of the
set of \emph{exits under~$\nu$}
of~$g$ in~$C$, written $D^\nu_g$, which is the set of
exits $g'$ having a live $\aplus$-path
to~$g$ under~$\nu$ in~$C$:

\begin{lemmarep}
  \label{lem:aplussemantics}
  For any valuation $\nu$ and $\aplus$-gate $g$,
  we have $\gsetv{\nu}{g} = \bigcup_{g' \in D^\nu_g} \gsetv{\nu}{g'}$. Further, this union is
  disjoint and all its terms are nonempty.
\end{lemmarep}

\begin{proof}
  The first part of the result is easy to prove by bottom-up induction on the
  $\aplus$-gates. Specifically,
  for any $\aplus$-gate $g$, letting $g_1, \ldots, g_n$ be its
  inputs, we have by definition $\gsetv{\nu}{g} \colonequals \bigcup_i
  \gsetv{\nu}{g_i}$.
  For the $g_i$ which are $\ctimes$-gates, we can replace them by their one
  set-valued child provided that the $\omega_\nu$-image of their input under~$\nu$
  is~$1$, otherwise we can remove them from consideration. Repeating this
  process on~$\ctimes$-gates as long as possible, we eliminate some gates, and
  for those that remain, we reach a gate which is not a $\ctimes$-gate
  (specifically, the $\delta$-image of the corresponding~$g_i$).
  By this reasoning, it is clear that $\gsetv{\nu}{g}$ is the union of the
  $\gsetv{\nu}{g_i}$ for the set $S_1$ of inputs~$g_i$ of~$g$ which are not $\ctimes$-gates, unioned to the
  union of the~$\gsetv{\nu}{g'}$ for
  the set $S_2$
  of gates~$g'$ reached by going down live paths of $\ctimes$-gates as we
  explained. Now, as $D^\nu_g$
  consists of the gates of~$S_1$, plus the sets
  $D^\nu_{g''}$ for the gates $g''$ of the set~$S_2$, we can
  conclude by induction that the claim made in the first sentence of the lemma holds
  for~$g$, concluding the proof of the first part.

  For the second part of the claim, the fact that the union is disjoint is
  thanks to determinism: assuming by contradiction that there is an assignment $a$
  such that $a \in \gsetv{\nu}{g'}$ and $a \in \gsetv{\nu}{g''}$ for two different gates $g' \neq g''$ of~$D^\nu_g$,
  consider a live $\aplus$-path $\pi' : g' = g'_1 \opath \cdots \opath g'_n = g$ from
  $g'$ to~$g$, and a live $\aplus$-path $\pi'': g'' = g''_1 \opath \cdots \opath g''_m
  = g$ from~$g''$ to~$g$. Let $g''' = g'_i = g''_j$ be the first gate where
  these two paths join; we have $i>1$ and $j>1$ because $g' \neq g''$. Hence, we
  know that $g'''$ is a $\aplus$-gate such that $g'_{i-1}$ and $g''_{j-1}$ both
  have a live $\ctimes$-path to~$g'''$; further $g'_{i-1} \neq g''_{j-1}$
  because $g'''$ is the first gate where the paths join. The paths $\pi'$ and
  $\pi''$
  clearly witness that $a \in \gsetv{\nu}{g'_{i-1}}$ and $a \in
  \gsetv{\nu}{g''_{j-1}}$. Now, let $h'$ be the last gate of a witnessing live
  $\ctimes$-path $\rho'$ from $g'_{i-1}$ to $g'''$, and $h''$ be the last gate of a
  witnessing live $\ctimes$-path $\rho''$ from $g''_{j-1}$ to~$g'''$. We have
  $h' \neq h''$ (otherwise we would have also $g'_{i-1} = \delta(h') =
  \delta(h'') = g''_{j-1}$, contradicting our earlier claim). Again $\rho'$ and
  $\rho''$ witness that $a \in \gsetv{\nu}{h'}$ and $a \in \gsetv{\nu}{h''}$.
  But $h'$ and $h''$ are two different inputs of the $\aplus$-gate $g'''$, so we
  have witnessed a violation of determinism, a contradiction. Hence, indeed, the
  union is disjoint.

  Last, the fact that none of the terms is empty is because the definition of
  live $\aplus$-paths enforces that $\omega_\nu(g') = 1$ for all $g' \in
  D^\nu_g$, so we conclude by definition of~$\omega_\nu$.
\end{proof}

Hence, we can enumerate $\gsetv{\mu}{g}$ for a $\aplus$-gate $g$ by enumerating
$D^\nu_g$ and the set $\gsetv{\nu}{g'}$ for each~$g'$ in~$D^\nu_g$. Note that
$g'$ is an exit, i.e., a variable 
or a $\atimes$-gate; so we make progress.

\begin{lemmarep}
  \label{lem:aplus}
  For any $\aplus$-gate $g$, if we can enumerate $D^\nu_g$ with delay and
  memory $c$, and can enumerate $\gsetv{\nu}{g'}$ for every $g' \in D^\nu_g$ with 
  delay and memory $\theta$, then we can enumerate $\gsetv{\nu}{g}$ with delay and
  memory $\theta + c + c'$ for some constant~$c'$.
\end{lemmarep}

\begin{proof}
  This follows immediately from Lemma~\ref{lem:aplussemantics} and the
  explanations given in the main text before the lemma statement.
\end{proof}

We have described our enumeration scheme in Lemmas~\ref{lem:atimes}, \ref{lem:ctimes}, and
\ref{lem:aplus}. The only missing piece is to enumerate, for each
$\aplus$-gate~$g$, the set
$D^\nu_g$ of exits
under~$\nu$ of~$g$, with \emph{constant} delay and memory.
To do so, we will need additional preprocessing.
We will rely on upwards-determinism, and extend the tree-based index
of~\cite{amarilli2017circuit_extended} to support updates.
We first
present an additional structure, called the \emph{switchboard}, that we compute
in the preprocessing; and we explain in the next section an indexing scheme
that we perform on this structure.

\subparagraph*{Switchboard.}
Our third index component in the preprocessing is called the
\emph{switchboard}. It consists of a directed graph $B = (V, E)$ called the
\emph{panel}, which does not depend on~$\nu$ (so it does not need to be
updated), and a valuation $\beta_\nu:E\to\{0, 1\}$ called the \emph{wiring}.
The \emph{panel} $B = (V, E)$ is defined as follows: $V$ consists of all $\aplus$-gates,
$\atimes$-gates, and $\avar$-gates, and $E \subseteq V \times V$ contains the
edge $(\delta(g'), g)$ for each wire $(g', g)$ of~$C$ such that $g$
is a $\aplus$-gate. This implies that the maximal fan-in of~$B$ is no greater than
that of~$C$, and it implies that $B$ is a DAG.
The \emph{wiring} $\beta_\nu$ maps every edge $(g', g)$ of~$B$ to~$1$ if there is 
a $\ctimes$-path from~$g'$ to~$g$ in~$C$ which is live under~$\nu$, and~$0$
otherwise.
We can use $\omega_\nu$ to compute the switchboard, and to update it
in time $O(\Delta(C))$ whenever $\nu$ is updated by toggling a gate
of~$C_\bvar$. Formally:

\begin{lemmarep}
  \label{lem:switchboard}
  The switchboard can be computed in linear time given~$C$
  and~$\nu$, and we can update it in time $O(\Delta(C))$ when toggling any gate
  in~$\nu$.
\end{lemmarep}

\begin{proof}
  We can compute $B$ in linear time during the preprocessing by going over all
  edges of~$C$, and using the characterization: for each wire $(g', g)$ of~$C$,
  if $g$ is a $\aplus$-gate, we add to~$E$ the edge $(\delta(g'), g)$, using
  $\delta$ which has already been computed.

  We can compute $\beta_\nu$ in linear time during the preprocessing, from
  $\omega_\nu$ (which has already been computed): initialize $\beta_\nu$ by
  mapping all edges of~$B$
  to~$0$, and for each wire $(g', g)$ of~$C$ such that $g$ is a $\aplus$-gate,
  set $\beta_\nu((\delta(g'), g)) = 1$ if $\omega_\nu(g') = 1$.

  For the claim on updating the wiring, let us define
  the \emph{dependent gates} $\Delta_E(g')$ of a gate~$g'$ in the switchboard~$B$
like we did for circuits, i.e., the set of gates $g$ such that there is a
  directed path from~$g'$ to~$g$ in the switchboard~$B$.
  Observe now that, by construction, for any
  gate $g$ of~$C$, we have $\Delta_E(g) \subseteq \Delta(g)$. Now, 
  when we update $\nu$ by toggling the value of $g \in C_{\bvar}$, then we can
  update $\beta_\nu$ in $O(\card{\Delta(g)})$. Indeed, we know that we can
  update $\omega_\nu$ in this time, and that it only changes on gates
  in~$\Delta(g)$. Hence, we can map to~$0$ all edges $(g'', g')$ of~$E$ such
  that $g' \in \Delta(g)$, and recompute $\beta_\nu$ on these edges. As $B$ has
  constant fan-in, the number of such edges is in $O(\card{\Delta_E(g)})$, which is
  $O(\card{\Delta(g)})$, achieving the bound.
\end{proof}

We now explain how we use the switchboard to enumerate, 
given a $\aplus$-gate $g$,
the set $D^\nu_g$ of the exits $g'$ having a \emph{live} $\aplus$-path to~$g$ under~$\nu$. In terms
of the switchboard, we must enumerate the exits $g'$ that
have a path to~$g$ in~$B$ whose
edges are all mapped to~$1$ by~$\beta_\nu$. Hence, we must solve the
following enumeration task on the switchboard: letting $\beta_\nu(B)$ be the DAG
of edges of~$B$ mapped to~$1$ by~$\beta_\nu$, we are given a gate $g$ of~$B$,
and we must enumerate all exit gates $g'$ of~$B$ (i.e., the $\atimes$-gates or
$\avar$-gates) that have a directed path to~$g$ in~$\beta_\nu(B)$. Further, we must
be able to handle updates on~$\beta_\nu(B)$, as given by updates on~$\nu$.
Fortunately, thanks to upwards-determinism, this problem is easier than it
looks:

\begin{claimrep}
  \label{clm:switchboard_forest}
  For any valuation $\nu$ of the hybrid circuit $C$, the DAG $\beta_\nu(B)$
  is a forest.
\end{claimrep}

\begin{proof}
  The claim can be equivalently rephrased as follows: there
  is no gate $g \in V$ such that $\beta_\nu((g, g')) = \beta_\nu((g, g'')) = 1$
  for two different gates $g' \neq g''$.
  (Pay attention to the fact that the edges of the forest are oriented upwards
  rather than downwards, following the direction of the wires in circuits.)

  To show this, 
  let us assume to the contrary that there is a valuation $\nu$ such that there
  are gates $g$ and $g' \neq g''$ with $\beta_\nu((g, g')) = \beta_\nu((g, g'')) = 1$,
  and let us conclude a violation of upwards-determinism. First, these
  $\beta_\nu$-values imply in particular that $\omega_\nu(g) = 1$. Now, consider two
  witnessing live $\ctimes$-paths $g = g_1' \rightarrow \cdots \rightarrow g_n' = g'$ and 
  $g = g_1'' \rightarrow \cdots \rightarrow g_m'' = g''$ where
  $g_2', \ldots, g_{n-1}'$ and $g_2'', \ldots, g_{m-1}''$ are $\ctimes$-gates
  whose $\omega_\nu$-image is~$1$. Let $g''' = g'_i = g''_j$ be the last common
  gate of these two paths; as $g' \neq g''$, we have $i < n$ and $j < m$.
  Consider the wires $(g''', g'_{i+1})$ and $(g''', g''_{j+1})$. The gate
  $g'_{i+1}$ is either $g'$, in which case it is a $\aplus$-gate and the wire is
  pure, or it is a $\ctimes$-gate whose $\omega_\nu$-image is~$1$, i.e., its
  second input evaluates to~$1$ under~$\nu$, so the wire is pure.
  The same is true of~$g''_{j+1}$.
  Hence, these two wires witness a violation of the upwards-determinism
  condition on~$g'''$ in~$\nu(C)$. This is a contradiction, which concludes the
  proof.
\end{proof}

\begin{example}
  \label{exa:switchboard}
  Figure~\subref{fig:switchboard} describes the
  switchboard for the hybrid circuit $C$ of Figure~\subref{fig:hybrid}. The
  edges of the switchboard correspond to $\ctimes$-paths. The
  switchboard itself is not a forest; however, for every valuation of~$C$,
  the $\ctimes$-paths that are live must always form a forest.
\end{example}

Thus, what we need is a constant-delay reachability index on forests that can be
updated efficiently when adding and removing edges to the forest. This is the
focus of the next section.

\section{Reachability Indexing under Updates}
\label{sec:reachability}
In this section, we present our indexing scheme for reachability on forests
under updates. The construction in this section is independent from what
precedes. For convenience, we will orient the edges of the forest downwards,
i.e., the reverse of the previous section (so $g$ is the parent of~$g'$ in the
forest if there is an edge from~$g'$ to~$g$ in the switchboard). We first define
the problem and state the enumeration result, and then sketch the proof.

\begin{toappendix}
  \subsection{Proof of Theorem~\ref{thm:forest}}
\end{toappendix}

\subparagraph*{Definitions and main result.}
A \emph{reachability forest} $F = (V, E, X)$ is a directed graph $(V, E)$ where
$V$ is the \emph{vertex set}, $E \subseteq V \times V$ are the \emph{edges}, and
$X\subseteq V$ is a subset of vertices called \emph{exits}. When $(v, v') \in
E$, we call $v$ a \emph{parent} of~$v'$, and $v'$ a \emph{child} of~$v$.
We impose three 
requirements on~$F$: 
(i.) the graph $(V, E)$ is a forest, i.e.,
each vertex of~$V$ has at most one parent;
(ii.) there is a constant \emph{degree bound} $c \in \NN$ such that every vertex
has at most~$c$ children;
(iii.) every exit $v\in X$ is a
\emph{leaf}, i.e., a vertex with no children.
We will call \emph{trees} the connected components of~$F$.
For convenience, we assume that~$F$ is \emph{ordered}, i.e., there is some
total order $<$ on the children of every node.

Given a reachability forest $F = (V, E, X)$ and a vertex $v \in V$,
we write
$\reach(v)$ for the set of exits reachable in $F$ from~$v$, i.e.,
the vertices of~$X$ to which~$v$ has a directed path. These
are the sets that we wish to enumerate efficiently, allowing two
kinds of \emph{updates} on the edges~$E$ of~$F$.
First, a \emph{delete
operation} is written $-E'$ for a set~$E' \subseteq E$, and 
$F = (V, E, X)$ is updated to $F-E' \colonequals (V, E \setminus E', X)$; it is
still a reachability forest. Second, an
\emph{insert operation} is written $+E'$ for some $E' \subseteq V \times V$,
and we require that the 
update result $F + E' \colonequals (V, E \cup E', X)$
still satisfies the three requirements above (with the same degree bound).
In terms of the order~$<$ on children, when we remove edges, we take
the restriction of~$<$ in the expected way, and when we insert edges, we add
each new child at an arbitrary position in~$<$.
We then introduce \emph{ancestry} to measure the impact of updates (analogously
to dependency size):
the \emph{ancestry} $\anc_F(v)$ of $v \in V$ is the set of vertices of~$F$
that have a directed path to~$v$, and
the \emph{ancestry} $\anc_F(E')$ for $E' \subseteq V \times V$ is $\bigcup_{(v,
w) \in E'} \anc_F(v)$.
We then have:

\begin{toappendix}
  In this appendix, we prove Theorem~\ref{thm:forest}.
  Recall the result
  statement:
\end{toappendix}

\begin{theoremrep}
  \label{thm:forest}
  Given a reachability forest~$F$,
  there is an enumeration algorithm with linear-time preprocessing such
  that: (i.) given any $v \in V$, we can enumerate $\reach(v)$ with
  constant delay and memory; (ii.) given an update $\pm E'$, we can
  apply it (replacing $F$ by $F \pm E'$ and updating the index) with
  update time in $O(\anc_F(E'))$.
\end{theoremrep}

\begin{toappendix}
  To show this
  result, we only need to argue that we can compute and update the pointers in
  our index (illustrated on an example in Figure~\ref{fig:forest}).
  Indeed, as we have explained in the main text, when we have these pointers, we
  can use them to perform enumeration with constant delay and memory.

  \tikzstyle{node}=[draw,circle]
\tikzstyle{exit}=[draw,rectangle]
\tikzstyle{edge}=[ultra thick]
\newcommand{\dside}[3][red]{
    \path[red,dashed,#1] (#2) edge (#3);
}
\newcommand{\ddown}[4][blue]{
  \path[blue,#1] (#2) edge[bend right=20] (#3);
  \path[blue,#1] (#2) edge[bend left=20] (#4);
}
\newcommand{\ddownsl}[1]{
  \path[blue] (#1) edge[out=360,in=300,looseness=5] (#1);
  \path[blue] (#1) edge[out=180,in=240,looseness=5] (#1);
}

\begin{figure}
  \centering
  \begin{tikzpicture}[xscale=1.9]
    \node[node] (a) at (0, 0) {};
    \node[node] (a1) at (-1, -1) {};
    \node[node] (a2) at (1, -1) {};
    \node[exit] (a21) at (.5, -2) {};
    \node[node] (a22) at (1.5, -2) {};
    \node[node] (a11) at (-1.5, -2) {};
    \node[exit] (a12) at (-.5, -2) {};
    \node[node] (a111) at (-2, -3) {};
    \node[exit] (a112) at (-1, -3) {};
    \draw[edge] (a) -- (a1);
    \draw[edge] (a) -- (a2);
    \draw[edge] (a2) -- (a21);
    \draw[edge] (a2) -- (a22);
    \draw[edge] (a1) -- (a11);
    \draw[edge] (a1) -- (a12);
    \draw[edge] (a11) -- (a111);
    \draw[edge] (a11) -- (a112);
    \dside{a112}{a12}
    \dside{a12}{a21}
    \ddown{a}{a112}{a21}
    \ddown{a1}{a112}{a12}
    \ddown{a11}{a112}{a112}
    \ddown{a2}{a21}{a21}
    \ddownsl{a112}
    \ddownsl{a12}
    \ddownsl{a21}

    \node[node] (b) at (2.5, 0) {};
    \node[exit] (b1) at (2, -1) {};
    \node[exit] (b2) at (3, -1) {};
    \draw[edge] (b) -- (b1);
    \draw[edge] (b) -- (b2);
    \dside{b1}{b2}
    \ddownsl{b1}
    \ddownsl{b2}
    \ddown{b}{b1}{b2}

    \node[exit] (c) at (5, 0) {};
    \ddownsl{c}

    \node[node] (d) at (4, 0) {};
    \node[node] (d1) at (4, -1) {};
    \node[node] (d11) at (4, -2) {};
    \node[node] (d111) at (3.5, -3) {};
    \node[node] (d112) at (4.5, -3) {};
    \draw[edge] (d) -- (d1);
    \draw[edge] (d1) -- (d11);
    \draw[edge] (d11) -- (d111);
    \draw[edge] (d11) -- (d112);
  \end{tikzpicture}

  \caption{Illustration of the index structure for Theorem~\ref{thm:forest} on
  an example reachability forest (drawn with thick edges).
  Exits are drawn as squares, other vertices are drawn as circles.
  The next pointers are drawn in straight dashed red lines (from left to right), and the first and
  last pointers of each node are drawn as curved solid blue lines (from top to
  bottom), respectively at the left and right of
  the node. Pointers that are $\snull$ are not drawn.}
  \label{fig:forest}
\end{figure}

  Hence, the only thing to show is the following result:
\end{toappendix}

Note how we can insert (or delete) many edges at the same time, paying
only once the price $\anc_F(E')$: this point is used in the proof of
Theorem~\ref{thm:circuits}
to bound the total
cost of each update on the circuit.
We sketch
the proof of Theorem~\ref{thm:forest} in the rest of this section.

\subparagraph*{Construction for Theorem~\ref{thm:forest}.}
Our index structure 
follows the one used to prove Proposition~F.4
of~\cite{amarilli2017circuit_extended}: it
maps every $v \in V$ to
a pointer $\first_F(v)$ and a pointer $\last_F(v)$, called the \emph{$\first$} 
and \emph{$\last$ pointer}; and
maps every exit $v \in X$ to a 
pointer $\pnext_F(v)$ called the \emph{$\pnext$ pointer}.
These pointers are defined using the order
$<'$
given by a preorder traversal of~$F$ following~$<$.
Specifically, $\first_F(v)$ is the first exit $v' \in \reach_F(v)$ according
to~$<'$, and $\last_F(v)$ is the last such exit; if $\reach_F(v) = \emptyset$ then both
pointers are~$\snull$. Now, $\pnext_F(v)$ for~$v \in X$ is the exit $v' \in
X$ in the tree of~$v$ which is the successor of~$v$ according to~$<'$; if $v$ is the last exit
of its tree, then $\pnext_F(v)$ is~$\snull$.
If we know these pointers, we can enumerate $\reach_F(v)$ for any
$v \in V$ with constant delay and memory as
in~\cite{amarilli2017circuit_extended}: if $\first_F(v)$ is~$\snull$ then there
is nothing to enumerate, otherwise start at $v_- \colonequals \first_F(v)$,
memorize $v_+ \colonequals \last_F(v)$, and enumerate the reachable
exits following the $\pnext$ pointers from~$v_-$ until reaching~$v_+$.
Hence, to conclude the proof of Theorem~\ref{thm:forest},
it suffices to compute and update these pointers efficiently:

\begin{lemmarep}
  \label{lem:forestclaim}
  Given a reachability forest~$F$,
  we can compute the $\first$, $\last$, and $\pnext$ pointers of all vertices
  in time~$O(\card{F})$.
  Further, for any update $\pm E'$, we can
  apply it and update the pointers in time $O(\anc_F(E'))$.
\end{lemmarep}

\begin{proofsketch}
  The $\first$ and $\last$ pointers are computed bottom-up in linear time:
  for a leaf~$v$, they either point to~$v$ if $v\in X$ or to~$\snull$ otherwise; 
  for an
  internal vertex $v$, we set $\first_F(v)$ as $\first_F(v')$ for the smallest
  child~$v'$ of~$v$ in the order~$<'$ with a non-$\snull$ $\first$ pointer (or $\snull$ if all
  $\first$
  pointers of children are~$\snull$), and we set $\last_F(v)$ analogously, using
  the $\last$ pointer of the largest child of~$v$ in the order~$<'$ for which the $\last$ pointer is non-$\snull$. Further, given
  an update $\pm E'$, the $\first$ and $\last$ pointers need only to be updated
  in~$\anc_F(E')$, and we can recompute them there with the same bottom-up
  scheme.

  The $\pnext$ pointers are also computed bottom-up in linear
  time: at each internal vertex $v$, we go over its children and
  stitch together the sequences of $\pnext$ pointers of their subtrees. 
  Specifically, when $\last_F(v_1)$ is not~$\snull$ for a child~$v_1$,
  we find the next child $v_2$ for which
  $\first_F(v_2)$ is not~$\snull$, and set $\pnext_F(\last_F(v_1))
  \colonequals \first_F(v_2)$. Again, for an update $\pm E'$,
  we recompute the
  $\pnext$
  pointers by processing $\anc_F(E')$ bottom-up in a similar fashion.
\end{proofsketch}

\begin{toappendix}
  Before we show the result, we make a simple observation on the complexity of
  updates. Remember that $\anc_F(E')$ refers to the ancestry of the parent edges
  of the vertices of~$E'$ in~$F$, i.e., \emph{before} the update is performed.
  We will sometimes want to process $\anc_{F \pm E'}(E')$, i.e., the ancestry
  of~$E'$ in the forest \emph{after} the update. However, the distinction
  between the two is inessential, because of the following result:

  \begin{claim}
    \label{clm:ancestry}
    For any reachability forest $F$ and update $\pm E'$, we have 
    $\anc_F(E') = \anc_{F \pm E'}(E')$.
  \end{claim}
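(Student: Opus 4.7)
The plan is to show the two set inclusions separately, first reducing the insertion case to the deletion case. Observe that $F = (F + E') - E'$ as reachability forests; so if the identity $\anc_G(E') = \anc_{G - E'}(E')$ holds for every reachability forest $G$ and every subset $E'$ of its edges, then applying it with $G \colonequals F + E'$ yields $\anc_{F+E'}(E') = \anc_F(E')$, handling the insertion case. Hence I focus on establishing $\anc_F(E') = \anc_{F-E'}(E')$ for an arbitrary $E' \subseteq E$.

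The inclusion $\anc_{F-E'}(E') \subseteq \anc_F(E')$ is immediate: for every $(v, w) \in E'$, deleting edges can only shrink ancestries, so $\anc_{F-E'}(v) \subseteq \anc_F(v)$, and taking the union over all $(v, w) \in E'$ yields the desired inclusion.

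For the reverse inclusion $\anc_F(E') \subseteq \anc_{F-E'}(E')$, fix some $(v, w) \in E'$ and some $a \in \anc_F(v)$, witnessed by a directed path $a = u_0 \to u_1 \to \cdots \to u_k = v$ in $F$. If none of the edges of this path lies in $E'$, then the whole path survives in $F - E'$, so $a \in \anc_{F-E'}(v) \subseteq \anc_{F-E'}(E')$. Otherwise, let $i$ be the least index such that $(u_i, u_{i+1}) \in E'$. Then $u_i$ is a parent-endpoint of an edge of $E'$, and by minimality of $i$ the prefix path $a = u_0 \to \cdots \to u_i$ uses only edges not in $E'$; so this prefix still exists in $F - E'$, giving $a \in \anc_{F-E'}(u_i) \subseteq \anc_{F-E'}(E')$.

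I do not anticipate any real obstacle. The only minor subtlety is handling paths that cross several deleted edges, which is precisely why we choose the \emph{earliest} deleted edge on the path: this guarantees that the prefix leading up to its parent-endpoint $u_i$ stays intact in $F - E'$, so that $u_i$ serves as a ``relay'' parent-endpoint whose ancestry in $F - E'$ still contains~$a$.
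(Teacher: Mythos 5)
Your proof is correct and follows essentially the same approach as the paper: reduce the insertion case to the deletion case via $F = (F + E') - E'$, observe that deletions can only shrink ancestries for one inclusion, and for the reverse inclusion walk the ancestor chain and use the first deleted edge (equivalently, the paper's partition of the chain by deleted edges) to find a ``relay'' parent-endpoint in~$E'$ whose ancestry in $F - E'$ still contains the given ancestor.
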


  \begin{proof}
    It suffices to show the claim for deletions. Indeed, for any forest $F$ and
    insertion $+ E'$, letting $F' \colonequals F + E'$, we have $F = F' - E'$,
    so we can simply apply the claim to~$F'$ and to the deletion $-E'$.

    Now, for deletions, we know that $\anc_F(E') \supseteq \anc_{F - E'}(E')$,
    because obviously $\anc_F(v) \supseteq \anc_{F - E'}(v)$ for any vertex
    $v\in V$. Conversely, let us show that $\anc_F(E') \subseteq \anc_{F -
    E'}(E')$ by showing that, for each $(v, w) \in E'$, we have $\anc_F(v)
    \subseteq \anc_{F - E'}(E')$. Consider the chain of ancestors of~$v$ in~$F$,
    and the edges between them (not including $(v, w)$): either none of these
    edges is in~$E'$, in which case we have $\anc_F(v) = \anc_{F-E'}(E')$, or
    some edges are. In this case, considering all edges $E''$ on this path that
    are in~$E'$,
    it is easy to see that the union of $\anc_{F- E'}(e)$ for $e \in E'' \cup
    \{(v, w)\}$ is exactly $\anc_F(v)$. Hence, indeed we have $\anc_F(v)
    \subseteq \anc_{F - E'}(E')$, which establishes the reverse inclusion and
    concludes the proof.
  \end{proof}
  
  We are now ready to prove Lemma~\ref{lem:forestclaim} (see also an
  illustration in Figure~\ref{fig:update}).

  \begin{figure}
  \hfill\begin{tikzpicture}[xscale=1.25, baseline={(current bounding box.center)}]
    \path[use as bounding box] (-2, 0) rectangle (2, -3.3);
    \node[node] (a) at (.5, 0) {};
    \node[node] (a1) at (-1, -1) {};
    \node[node] (a2) at (2, -1) {};
    \node[node] (a11) at (-2, -2) {};
    \node[node] (a12) at (0, -2) {};
    \node[node] (a21) at (2, -2) {};
    \node[exit] (a111) at (-2, -3) {};
    \node[exit] (a121) at (-.75, -3) {};
    \node[exit] (a122) at (.75, -3) {};
    \node[exit] (a211) at (2, -3) {};

    \draw[edge] (a) -- (a1);
    \draw[edge] (a1) -- (a11);
    \draw[edge] (a11) -- (a111);
    \draw[edge] (a12) -- (a121);
    \draw[edge] (a12) -- (a122);
    \draw[edge] (a) -- (a2);
    \draw[edge] (a2) -- (a21);
    \draw[edge] (a21) -- (a211);

    \dside{a121}{a122}
    \path[red,dashed] (a111) edge[bend right=20] (a211);

    \ddown{a11}{a111}{a111}
    \ddown{a12}{a121}{a122}
    \ddown{a2}{a211}{a211}
    \ddown{a21}{a211}{a211}
    \path[blue] (a) edge[bend right=90] (a111);
    \path[blue] (a) edge[bend left=90] (a211);
    \path[blue] (a1) edge[bend right=10] (a111);
    \path[blue] (a1) edge[bend left=30] (a111);

    \ddownsl{a111}
    \ddownsl{a121}
    \ddownsl{a122}
    \ddownsl{a211}
  \end{tikzpicture}
  \hfill
  \hfill
  {\huge $\Rightarrow$}
  \hfill
  \hfill
  \begin{tikzpicture}[xscale=1.25, baseline={(current bounding box.center)}]
    \path[use as bounding box] (-2, 0) rectangle (2, -3.3);
    \node[node] (a) at (.5, 0) {};
    \node[node] (a1) at (-1, -1) {};
    \node[node] (a2) at (2, -1) {};
    \node[node] (a11) at (-2, -2) {};
    \node[node] (a12) at (0, -2) {};
    \node[node] (a21) at (2, -2) {};
    \node[exit] (a111) at (-2, -3) {};
    \node[exit] (a121) at (-.75, -3) {};
    \node[exit] (a122) at (.75, -3) {};
    \node[exit] (a211) at (2, -3) {};

    \draw[edge] (a) -- (a1);
    \draw[edge] (a1) -- (a11);
    \draw[edge] (a11) -- (a111);
    \draw[edge] (a12) -- (a121);
    \draw[edge] (a12) -- (a122);
    \draw[edge] (a1) -- (a12);
    \draw[edge] (a) -- (a2);
    \draw[edge] (a2) -- (a21);
    \draw[edge] (a21) -- (a211);

    \dside{a111}{a121}
    \dside{a122}{a211}
    \dside{a121}{a122}

    \ddown{a11}{a111}{a111}
    \ddown{a12}{a121}{a122}
    \ddown{a2}{a211}{a211}
    \ddown{a21}{a211}{a211}
    \path[blue] (a) edge[bend right=90] (a111);
    \path[blue] (a) edge[bend left=90] (a211);
    \path[blue] (a1) edge[bend right=10] (a111);
    \path[blue] (a1) edge[bend left=30] (a122);

    \ddownsl{a111}
    \ddownsl{a121}
    \ddownsl{a122}
    \ddownsl{a211}
  \end{tikzpicture}
  \hfill
  \null

  \caption{Example for updating the index of Theorem~\ref{thm:forest} when
  inserting an edge (from the left forest to the right forest). We follow the same
  drawing conventions as in Figure~\ref{fig:forest}.}
  \label{fig:update}
\end{figure}

  \begin{proof}[Proof of Lemma~\ref{lem:forestclaim}]
    Let $F' \colonequals F \pm E'$.
    We first show the result for the $\first$ and $\last$ pointers. For the initial
    computation, we use the scheme described in the proof sketch; it clearly runs
    in linear time (it examines every edge of~$F$ once), and it is correct by a
    straightforward induction.

    To update the $\first$ and $\last$ pointers, we observe that, for every vertex $v$ \emph{not} in
    $\anc_F(E')$, the pointers do not need to be changed: this is clear because,
    for every such $v$, the subtree in~$F'$ rooted at~$v$ is exactly the
    same as in~$F$. Hence, outside of~$\anc_F(E')$, the $\first$ and $\last$ pointers
    are still correct, so it suffices to update
    the pointers in~$\anc_F(E')$. We do this by the same bottom-up scheme as for
    the initial computation. Specifically, for deletions, we process $\anc_F(E')$
    (i.e., the ancestry in the original~$F$, \emph{before} the update), but we
    perform the computation at each node based on its children \emph{after} the
    update (i.e., ignoring children whose parent edge has just been deleted).
    For insertions, we process $\anc_{F'}(E')$, i.e., the ancestry
    in~$F'$ \emph{after} the update, and perform the computation at each node
    based its the children after the update (i.e., after all insertions have been
      performed): this gives the right complexity thanks to
      Claim~\ref{clm:ancestry}.
    The correctness of this update scheme is again shown by
    induction, using the additional base case that consists of the vertices
    outside of~$\anc_F(E')$, which are correct as we explained. The complexity is
    in $O(\card{\anc_F(E')})$, because we examine edges in the set~$\anc_F(E')$
    and child edges of vertices of this set, so at most $(c+1) \card{\anc_F(E')}$
    where~$c$ is the constant degree bound. This concludes the proof for the
    $\first$
    and $\last$ pointers.

    \medskip

    For the $\pnext$ pointers, we give a more precise description of the scheme presented in the
    proof sketch. We process~$F$ bottom-up and ensure that, whenever we are done
    processing a vertex $v\in F$, then the $\pnext$ pointers within the subtree
    rooted at~$v$ are correct; but we do not specify anything about the $\pnext$
    pointer of the last exit in this subtree. Initially, we set all $\pnext$
    pointers to~$\snull$, which is correct as a base case for the leaves. Now,
    to process $v\in F$ with children $v_1 < \cdots < v_n$, assuming by
    induction that the $\pnext$ pointers within each subtree rooted at~$v_1$ are
    correct, we simply need to go over the~$v_i$ in order, maintaining a
    \emph{current last exit} $v'$ which denotes the last exit among all $v_i$
    seen so far, whose $\pnext$ pointer is currently~$\snull$. The current last exit $v'$ is
    initially $\snull$. When we look at~$v_i$, if $\first_F(v) = \snull$, we do
    nothing. Otherwise, if $v' = \snull$, then we replace $v'$ by $\last_F(v)$
    and do nothing more.
    Otherwise, if $v'$ is not null, then letting $v'' \colonequals
    \first_F(v_i)$, we set $\pnext_F(v') \colonequals v''$ and we set $v'
    \colonequals v''$ as our new current last exit. It is clear that this
    process satisfies our invariant.

    Now, when we process the root $v$ of a tree using this scheme,
    in the case where the last reachable exit
    $v' \colonequals \last_F(v)$ is not~$\snull$ at the end of the process,
    our invariant does not guarantee anything about $\pnext_F(v')$;
    but we can simply ensure that the $\pnext$ pointers
    in that tree are correct (including the $\pnext$ pointer of the last exit) by
    setting $\pnext_F(v') \colonequals \snull$. The overall scheme clearly runs in
    linear time for the initial computation, and it is inductively correct.
      
    We conclude by explaining the update scheme for the $\pnext$ pointers. 
    Note that, this time, it is no longer the case that the $\pnext$ pointers to be
    updated are all in $\anc_F(E')$: see Figure~\ref{fig:update} for an example.
    However, intuitively, the vertices outside of~$\anc_F(E')$ whose
    $\pnext$-pointers need to be
    updated are all reachable as the value of a $\last$ pointer for a vertex of~$\anc_F(E')$, so
    we can fix all pointers by re-running our bottom-up computation scheme
    on~$\anc_F(E')$.
    Initially, we keep $\pnext_{F'}(v) \colonequals \pnext_F(v)$, which requires
    no modifications on the index. As in the preprocessing, we will ensure as an
    invariant when performing the update that, when we are done processing a vertex~$v$, the $\pnext$ pointers
    in the subtree rooted at~$v$ in~$F'$ are all correct (specifically, for
    every reachable exit of~$X$ in~$F'$, the $\pnext$ pointer correctly points to
    the next exit if it exists); but again we do not specify anything about the
    $\pnext$ pointer of the last exit of this subtree.
    Note that the invariant is already satisfied for
    all vertices not in $\anc_F(\pm E')$: their reachable subtree is unchanged
    between~$F$ and~$F'$, so all $\pnext$ pointers within the subtree are still
    correct.

    We now process $\anc_F(\pm E')$ bottom-up, relying on the above observation
    for the base case, and doing the inductive case exactly as in the
    preprocessing algorithm above. Like in the update scheme for $\first$ and
    $\last$
    pointers, we process $F'$ in the case
    of insertions (using Claim~\ref{clm:ancestry} to ensure that the size bound
    is correct), and process $F$ in the case of deletions (but at each node we
    do not take 
    into account the children corresponding to edges of~$F$ that are deleted
    in~$F'$). This processing allows us to ensure that all nodes in~$F'$ satisfy
    the invariant. Now, as before, once we
    have processed a vertex which is the root of a tree in $F'$,
    then we set the $\pnext$ pointer of the last reachable
    exit of the root of this tree to~$\snull$: this ensures that, in addition to the
    invariant, all $\pnext$ pointers in its tree
    are correct (including the last one).

    At the end of this processing, the invariant is ensured on all trees, and
    further we know that the last exit of each tree correctly has~$\snull$ as
    its $\pnext$ pointer. Hence, we have correctly recomputed the $\pnext$ pointers in
    the prescribed time bound. Hence, we have explained how to handle updates
    for the $\pnext$ pointers, which concludes the proof of
    Lemma~\ref{lem:forestclaim}.
  \end{proof}
\end{toappendix}

\begin{toappendix}
  \subsection{Putting Everything Together}
\label{apx:together}

In this appendix section, we recap the proof of our main results. We first prove
Theorem~\ref{thm:circuits}:

\begin{proof}[Proof of Theorem~\ref{thm:circuits}]
  We apply the scheme of Section~\ref{sec:enumeration}. Given $C$ and~$\nu$,
  we compute the shortcut function $\delta$,
  the partial evaluation $\omega_\nu$,
  and the switchboard composed of the panel $B = (V, E)$ 
  and its wiring $\beta_\nu$.
  Further, we compute
  the index structure of Theorem~\ref{thm:forest} on the DAG
  $\beta_\nu(B)$ of the edges of~$B$ mapped to~$1$ by~$\nu$ (choosing any
  arbitrary order on the children of each vertex),
  which is a forest
  by Claim~\ref{clm:switchboard_forest}.
  Note that the \emph{exits} (non-$\aplus$-gates) in the forest are the exits in
  the sense of Section~\ref{sec:enumeration} (i.e., the $\atimes$-gates and
  $\avar$-gates in the panel), which are leaves by definition of the panel. Also
  note that, in the reachability forest, 
  all nodes have degree no greater than the maximal fan-in of~$C$ (so we can use it
  as degree bound~$c$).
  Keep in mind that the direction is
  reversed between $\beta_\nu(B)$ as defined in Section~\ref{sec:enumeration},
  and the reachability forest as studied in Section~\ref{sec:reachability}.
  Indeed,
  in the circuit, we want to enumerate the reachable exits of
  a $\aplus$-gate $g$ in the sense of having a path (specifically, a live
  $\aplus$-path) to~$g$, whereas in the reachability forest, we enumerate the
  exits to which~$g$ has a directed path. However, this is fine because
  upwards-determinism guarantees in Claim~\ref{clm:switchboard_forest} that
  $\beta_\nu(B)$ is a forest where the edges are oriented upwards (see the proof
  for details), so reversing
  the edges gives a forest in the sense of Section~\ref{sec:reachability}.
  This concludes the description of our preprocessing scheme, which runs in
  linear time.

  \medskip

  To enumerate the assignments of the circuit, we use the scheme described by
  Lemmas~\ref{lem:atimes}, \ref{lem:ctimes}, and \ref{lem:aplus}, as well as the
  explanations around them in the main text. To enumerate the set $D^\nu_g$ of
  the reachable exits of~$g$ for Lemma~\ref{lem:aplus},
  i.e., the exits having a directed path to~$g$ in~$\beta_\nu(B)$, we enumerate
  the set $\reach_F(g)$ in the reachability forest, which is precisely what can
  be done with the index of Theorem~\ref{thm:forest}. We summarize why the
  enumeration is in delay and memory linear in each produced assignment:
  \begin{itemize}
    \item Whenever we reach a $\ctimes$-gate, we pay constant delay and memory and reach
      a gate which is not a $\ctimes$-gate;
    \item Whenever we reach a $\aplus$-gate, we pay constant delay and memory and reach
      a gate which is an \emph{exit}, i.e., not a $\ctimes$-gate or
      $\aplus$-gate;
    \item Whenever we reach a $\atimes$-gate, we pay constant delay and memory to reach
      two other gates, and we will enumerate an assignment which is a disjoint
      union of the assignments enumerated at each gate, none of which is the
      empty assignment;
    \item Whenever we reach an $\avar$-gate, we pay constant delay and memory to
      enumerate a singleton.
  \end{itemize}
  Hence, when enumerating an assignment $a$, we reach exactly $\card{a}$
  $\avar$-gates, and at most $\card{a}-1$ $\atimes$-gates, so we reach at most
  $\card{a} + (\card{a}-1)$ $\ctimes$-gates and the same number of
  $\aplus$-gates, hence the total delay and memory is linear in the output assignment.
  This concludes the description of the enumeration scheme, which has delay and
  memory linear in each assignment.

  \medskip

  We must now explain how updates are handled.
  Let $g$ be the Boolean variable whose value should be toggled in~$\nu$. We
  modify~$\nu$ to~$\nu'$, use Lemma~\ref{lem:omega} to update $\omega_\nu$
  to~$\omega_{\nu'}$ in time $O(\Delta(C))$, and use Lemma~\ref{lem:switchboard}
  to update $\beta_{\nu}$ to $\beta_{\nu'}$ in same time bound. Further, 
  looking at the proof of Lemma~\ref{lem:switchboard},
  we know that the set $E'$ of edges $e = (g_1, g_2)$ of~$E$
  such that $\beta_\nu(e) \neq \beta_{\nu'}(e)$ must all be such that their second
  gate $g_2$ is in~$\Delta(C)$  Let us split~$E'$ into the edges $E_+
  \colonequals \{e \in E \mid \beta_{\nu'}(e) = 1 \land \beta_\nu(e) = 0\}$
  that are added in~$\beta_{\nu'}(B)$, and a set of edges~$E_-$ analogously defined
  that are deleted in~$\beta_{\nu'}(B)$; each of these edges satisfies that
  their second gate is in~$\Delta(g)$. 
  We update the indexing
  structure of Section~\ref{sec:enumeration} by first deleting~$E_-$, and then
  adding~$E_+$: the end result is still a forest by
  Claim~\ref{clm:switchboard_forest}, and the intermediate result is also a
  forest because we have performed deletions on a forest. We must now argue why
  each of these operations has the required complexity, i.e., $O(\Delta(C))$.
  To see why, observe that the ancestry of~$E_-$
  in the reachability forest before the deletions consists of~$E_-$ plus edges
  where both endpoints are in~$\Delta(g)$, so the ancestry has size $O(\Delta(C))$.
  Likewise, the ancestry of~$E_+$ in the reachability forest before the
  insertions is a subset of the ancestry before the deletions, and in this case
  again it consists of~$E_+$ plus edges where both endpoints are in~$\Delta(g)$,
  hence again the ancestry has size $O(\Delta(C))$.
  Hence, the result of Theorem~\ref{thm:forest} ensures that the complexity of
  updating the reachability structure is still in~$O(\Delta(C))$. This completes
  the description of updates, and the overall update complexity is indeed
  $O(\Delta(C))$.
\end{proof}

We can now show our main result:

\begin{proof}[Proof of Theorem~\ref{thm:main}]
  Given the tree alphabet $\Gamma$ and MSO query $Q(\mathbf{X})$, we use
  Lemma~\ref{lem:balancing} to compute a tree alphabet $\Gamma' \supseteq
  \Gamma$ and MSO query $Q'(\mathbf{X})$. Now, given an input $\Gamma$-tree~$(T,
  \lambda_0)$, we compute in linear time from~$T$ the $\Gamma'$-tree~$T'$
  described by the lemma statement, and we compute in linear time $\lambda''_0$
  which is the valuation $\lambda''$ in the statement of
  Lemma~\ref{lem:balancing} defined from the initial
  valuation $\lambda_0$ of~$T$.
  Now, we can enumerate $Q'$ on $\lambda''(T')$ instead of~$Q$
  on~$\lambda(T)$, and whenever an update operation changes~$\lambda$, then it
  takes constant time to translate it to an update on~$\lambda''$. Hence, we can
  work only with $Q'$, $T'$, and $\lambda''$, without changing our bounds; and
  we know that $T'$ is balanced, i.e., $h(T') = O(\log(\card{T}))$.

  We now use Theorem~\ref{thm:provenance} to compute a hybrid circuit $C$
  capturing the provenance of~$Q''$ on the unlabeled tree~$T'$. We know that
  $C$ is an upwards-deterministic d-DNNF with constant fan-in, and that its
  dependency size is in $O(\log(\card{T}))$. We do this as part of the
  linear-time preprocessing, computing also an initial Boolean valuation $\nu$
  of~$C_{\bvar}$ from the initial valuation $\lambda''$ of~$T'$.
  The definition of provenance
  circuits then ensures that we can enumerate $Q'(\lambda''(T'))$ simply by
  enumerating $\nu(C)$, and that we can reflect updates of~$\lambda''$ by
  translating them in constant time to an update on~$\nu$.

  We now use Lemma~\ref{lem:homogenize} to make
  the circuit homogenized while
  ensuring that it is still an upwards-deterministic d-DNNF and that it still
  satisfies the bound on fan-in and dependency size: note that this adds a
  secondary output gate.
  We now conclude our proof by appealing to Theorem~\ref{thm:circuits}: we can
  enumerate the assignments of~$C$ with linear-time preprocessing, delay and
  memory linear in each produced assignment, and handle updates to~$\nu$ in time
  linear in the dependency size of~$C$, that is, in $O(\log \card{T})$. This
  result ignores the secondary output added when homogenizing the circuit, so we
  may miss the empty assignment whenever it is captured,
  but we can simply extend Theorem~\ref{thm:circuits} to handle the secondary output
  gate $g_1$ by starting the enumeration with the empty assignment if we have
  $\omega_\nu(g_1) = 1$. This achieves the desired bounds, and concludes the
  proof.
\end{proof}

\end{toappendix}

\section{Applications}
\label{sec:applications}
We have finished the proof of our main result (Theorem~\ref{thm:main}),
and now explain how it applies to query languages motivated by
applications. Specifically, we show how to extend our techniques 
to support \emph{aggregate} queries in arbitrary semirings, following the ideas
of semiring provenance~\cite{green2007provenance} and provenance
circuits~\cite{deutch2014circuits}. We then extend this to \emph{group-by queries}, and
last explain how updates are useful to support \emph{parameterized queries}.
Throughout this section, unlike the rest of the paper, we only study MSO queries
with free first-order variables.

\subparagraph*{Aggregate queries.}
We will describe aggregation operators using a general structure called a
\emph{semiring} (always assumed to be commutative). It consists of a \emph{set}
$K$ (finite or infinite), two binary operations $\splus$ and $\stimes$, and
distinguished
elements $0_K, 1_K \in K$. We require that $(K, \splus)$ and $(K, \stimes)$ are
commutative monoids with neutral elements respectively $0_K$ and~$1_K$; that
$\stimes$ distributes over~$\splus$, and that $0_K$ is absorptive for~$\stimes$,
i.e., $0_K \stimes a = 0_K$ for all $a \in K$. We always assume that
evaluating $\splus$ or $\stimes$ take constant time, and that
elements from~$K$ take constant space. Examples of semirings include the natural
numbers $\NN$ with usual addition and product (assumed to take unit time in the RAM
model); or the \emph{security
semiring}~\cite{foster2008annotated}, the \emph{tropical
semiring}~\cite{deutch2014circuits}, etc. Note that sets
of assignments with union and relational product are also a semiring, but one
that does not satisfy our constant-space assumption.

To define aggregation in a semiring~$K$ on a tree~$T$, we consider a mapping
$\rho:T\to K$ giving a value in~$K$ to each node. We extend $\rho$ to
tuples~${\mathbf{b}}$
of~$T$ by setting $\rho({\mathbf{b}}) \colonequals \bigotimes_{n \in
{\mathbf{b}}} \rho(n)$;
to assignments $A$ on some first-order variable set~$\mathbf{x}$ by setting
$\rho(A) \colonequals \bigotimes_{\langle x_i: n \rangle \in A} \rho(n)$;
and to sets $S$ of
assignments by setting $\rho(S) \colonequals \bigoplus_{A \in S} \rho(A)$. An
\emph{aggregate query} on $\Gamma$-trees consists of a
semiring~$K$ (satisfying our assumptions) and of a MSO query $Q(\mathbf{x})$ on $\Gamma$-trees. Given a
$\Gamma$-tree $T$ and a mapping $\rho:T\to K$,
the \emph{aggregate output} $Q_\rho(T)$ of~$Q$ on~$T$ under~$\rho$ is
$\rho(Q(T))$, where $Q(T)$ is the output of~$Q$ on~$T$ as we studied so far, i.e., the set of
assignments~$A$ such that $T \models Q(A)$.
Aggregate MSO queries on trees were already studied,
e.g., by Arnborg and Lagergren~\cite{arnborg1991easy}, but our techniques allow
us to handle updates:

\begin{theoremrep}
  \label{thm:aggregates}
  For any aggregate query $Q(\mathbf{x})$ on $\Gamma$-trees
  with semiring~$K$, given a
  $\Gamma$-tree~$T$ and mapping $\rho:T\to K$, we can compute
  $Q_\rho(T)$ in time $O(\card{T})$, and recompute it in time
  $O(\log \card{T})$ after any update that relabels a node
  of~$T$ or that changes $\rho(n)$ for a node~$n$ of~$T$.
\end{theoremrep}

\begin{proofsketch}
  We adapt hybrid circuits by replacing set-valued gates by $K$-valued gates.
  Now, the set $\gsetv{\nu}{g}$ captured by a gate~$g$ under a Boolean valuation $\nu$
  is an element of~$K$, so we can simplify
  our linear-time preprocessing by making $\omega_\nu$ compute exactly
  $\gsetv{\nu}{g}$ for each gate~$g$. We can then handle updates to~$\nu$ as before,
  and handle updates to~$\rho$ by recomputing~$\omega_\nu$ bottom-up. All of
  this still relies on the balancing lemma
  (Lemma~\ref{lem:balancing}).
\end{proofsketch}

\begin{proof}
  As explained in the sketch, the first step is to show the analogue of
  Theorem~\ref{thm:circuits} where we want to compute $\rho(\gsetv{\nu}{C})$
  instead of enumerating $\gsetv{\nu}{C}$, and where updates can additionally
  change~$\rho$.
  In this variant, we do not apply the homogenization result
  (Lemma~\ref{lem:homogenize}), so we work with a
  hybrid circuit that directly captures the set of assignments of which we want to
  compute the $\rho$-image (i.e., the empty assignment is captured directly,
  without the need for a secondary output).
  We can then perform the initial computation with a much simpler
  variant of the preprocessing scheme, namely, we compute a function
  $\omega_\nu'$ that maps each Boolean gate of~$C$ to its Boolean value
  $\bval{\nu}{g}$ under~$\nu$, and maps each set-valued gate of~$C$ to the value
  $\rho(\gsetv{\nu}{C}) \in K$. We compute $\omega_\nu'$ bottom-up using the
  analogue of Lemma~\ref{lem:omega} (note that this did not depend on
  homogenization of the input circuit), changing the computation on set-valued
  gates as follows:

  \begin{itemize}
    \item For a set-valued variable gate $g$, we set $\omega_\nu'(g)
      \colonequals \rho(g)$;
    \item For a $\atimes$-gate $g$ with no inputs, we set $\omega_\nu'(g)
      \colonequals 1_K$;
    \item For a $\atimes$-gate $g$ with two inputs $g'$ and $g''$, we set
      $\omega_\nu'(g) \colonequals \omega_\nu'(g') \otimes \omega_\nu'(g'')$;
    \item For a $\ctimes$-gate $g$, letting $g'$ be its Boolean input and $g''$
      be its set-valued input, we set $\omega_\nu'(g) \colonequals
      \omega_\nu'(g'')$ if~$\omega_\nu(g') = 1$ and $\omega_\nu'(g) \colonequals
      0_K$ otherwise;
    \item For a $\aplus$-gate $g$, letting $g_1, \ldots, g_n$ be its inputs,
      we set $\omega_\nu'(g) \colonequals \bigoplus_i \omega_\nu'(g')$.
  \end{itemize}

  It is clear by induction that this computes the right value, and the
  computation takes linear time overall because semiring operations take
  constant time by our assumptions.

  Now, whenever an update is performed on a variable gate $g$ (either a set-valued gate,
  for updates to~$\rho$, or a Boolean gate, for updates to~$\nu$), it is clear
  (like in Lemma~\ref{lem:omega}) that the only gates whose $\omega_\nu'$-value
  may change are those of $\Delta(g)$, so we can simply recompute $\omega_\nu'$
  on~$\Delta(g)$ in time $O(\card{\Delta(g)})$.

  We can then conclude the proof using this variant of
  Theorem~\ref{thm:circuits} like we proved Theorem~\ref{thm:main} from
  Theorem~\ref{thm:circuits}, except that we do not apply
  Lemma~\ref{lem:homogenize}. In particular, we make sure to apply
  Lemma~\ref{lem:balancing} before invoking the enumeration result on circuits,
  to ensure that the height of the input tree, hence the dependency size of the
  circuit and the time bound on updates, are in $O(\log \card{T})$: we can do
  this because Lemma~\ref{lem:balancing} preserves exactly the set of
  assignments, so it also preserves the $\rho$-image of this set.
\end{proof}

One important application of this result is \emph{maintaining} the number of query
answers under updates, a question left open by~\cite{losemann2014mso}. We answer
the question for relabeling updates (and in the set semantics), using the
semiring $\NN$ and mapping each node to~$1$ with~$\rho$:

\begin{corollaryrep}
  \label{cor:counting}
  For any MSO query $Q(\mathbf{x})$ on $\Gamma$-trees, given a $\Gamma$-tree $T$, we can compute
  the number $\card{Q(T)}$ of answers of~$Q$ on~$T$ in time $O(\card{T})$, and
  we can update it in time $O(\log \card{T})$ after a relabeling of~$T$.
\end{corollaryrep}

\begin{proof}
  We apply Theorem~\ref{thm:aggregates} using the semiring $\NN$ with usual
  addition and product (assumed to take unit time in the RAM model) and the
  mapping $\rho$ that maps each node to~$1$. This ensures that, for each
  assignment $A$ (including the empty assignment), we have $\rho(A) = 1$; hence,
  for each set $S$ of assignments, we have $\rho(S) = \card{S}$, the number of
  assignments in the set. Thus, Theorem~\ref{thm:aggregates} implies the desired
  result.
\end{proof}

However, we can also use Theorem~\ref{thm:aggregates} for more complex
aggregation semirings:

\begin{example}
  \label{exa:average}
  Let $\Gamma = \{A, B\}$,
  let $Q(x)$ be a MSO query with one variable that selects some tree nodes
  (e.g., select the $B$-labeled nodes which are descendants of some $A$-labeled
  node), let~$(T, \lambda)$ be a $\Gamma$-tree, and
  let $\chi$ be a function that maps each node of~$T$ to an element of the set $\mathbb{D}$ of floating-point numbers (with fixed
  precision). We can compute in linear time the \emph{average} of~$\chi(n)$ for
  the nodes~$n$ such that $T \models Q(n)$, and update it in logarithmic time
  when relabeling a node of~$T$ or changing a value of~$\chi$. This follows from
  Theorem~\ref{thm:aggregates}: we use the semiring of pairs in~$\NN\times \mathbb{D}$ 
  and the mapping $\rho:n \mapsto (1, \chi(n))$ to compute and
  maintain the number of selected nodes and the sum of their $\chi$-images,
  from which we can deduce the average in constant time.
\end{example}

\subparagraph*{Group-by.}
We have adapted our techniques to show results for aggregate queries under
updates. However, supporting updates is also useful for \emph{group-by queries}.
A \emph{group-by query} consists of a MSO query $Q(\mathbf{x}, \mathbf{y})$ on
$\Gamma$-trees with two tuples of first-order variables, and of a semiring $K$.
A \emph{group} on a $\Gamma$-tree $T$ is a set of tuples
$\mathcal{G}(\mathbf{b}) \colonequals \{(\mathbf{b}, {\mathbf{c}}) \mid T \models
Q(\mathbf{b}, {\mathbf{c}})\}$ for some
tuple~$\mathbf{b}$ of nodes of~$T$.
The \emph{output} $Q_\rho(T)$ of~$Q$ on~$T$ under a mapping $\rho:T \to K$
contains one pair $(\mathbf{b}, \rho(\mathcal{G}(\mathbf{b})))$ for each
tuple~$\mathbf{b}$ such that
$\mathcal{G}(\mathbf{b})$ is non-empty.

\begin{example}
  Consider a MSO query $Q(x, y)$ and the semiring $\NN$. The output
  of~$Q$ on a $\Gamma$-tree~$T$ under a mapping~$\rho$ contains one pair per
  $n \in T$, annotated with the sum of $\rho(n')$ for $n'\in T$ such that
  $T \models Q(n, n')$, where we exclude the nodes~$n$ for which the sum is
  empty.
\end{example}

\begin{theoremrep}
  \label{thm:groupby}
  For any group-by query $Q(\mathbf{x}, \mathbf{y})$ 
  and semiring~$K$, given a
  $\Gamma$-tree $T$ and $\rho:T\to K$, we can enumerate 
  $Q_\rho(T)$ with linear-time preprocessing and delay in $O(\log \card{T})$
\end{theoremrep}

\begin{proofsketch}
  We use two enumeration structures. First, we prepare the structure of
  Theorem~\ref{thm:aggregates} for $Q(\mathbf{x}, \mathbf{y})$ but writing the
  valuation of~$\mathbf{x}$ as part of the tree label. Second, 
  we enumerate the non-empty groups with constant delay using Theorem~\ref{thm:main}
  on~$\exists \mathbf{y} ~ Q(\mathbf{x}, \mathbf{y})$. For each tuple
  $\mathbf{b}$ in the output of the second structure, 
  letting $\mathcal{G}(\mathbf{b})$ be the corresponding group,
  we update the
  first structure to compute $\rho(\mathcal{G}(\mathbf{b}))$ in time $O(\log \card{T})$.
\end{proofsketch}

\begin{proof}
  Fix the group-by query $Q(\mathbf{x}, \mathbf{y})$  and the tree alphabet
  $\Gamma$. Let $\Gamma_{\mathbf{x}}$ be the alphabet where we add one label
  $l_{x_i}$ for each $x_i \in \mathbf{x}$. Let $Q'(\mathbf{y})$ be the query
  obtained from~$Q$ by reading the valuation of~$\mathbf{x}$ on the tree using
  the labels $l_{x_i}$, i.e., we add a conjunct asserting that, for each~$i$,
  there is exactly one tree node carrying label $l_{x_i}$, and we quantify
  over~$\mathbf{x}$ so that $x_i$ is interpreted as this one node. It is clear
  that for any unlabeled tree~$T$ and labeling $\lambda:T\to \Gamma$, for each
  tuple~$\mathbf{b}$ of nodes of~$T$,
  letting $\lambda_{\mathbf{b}}:T \to \Gamma_{\mathbf{x}}$
  be the valuation of~$T$ defined by $\lambda_{\mathbf{b}}:n \mapsto \lambda(n) \cup
  \{l_{x_i} \mid b_i = n\}$, we have that $\rho(Q'(\lambda_{\mathbf{b}}(T)))$ is equal to
  $\rho(\mathcal{G}({\mathbf{b}}))$ for the group $\mathcal{G}({\mathbf{b}})$
  associated to~${\mathbf{b}}$ 
  on~$\lambda(T)$. Hence, let us
  apply Theorem~\ref{thm:aggregates} to the query~$Q'$, the semiring~$K$, and
  the mapping~$\rho$, on the tree
  $\lambda_{{\mathbf{b}}_0}(T)$ for some arbitrary choice of~${\mathbf{b}}_0$. We do this as part of our
  linear-time preprocessing, and this describes the first enumeration structure.

  We now describe the second enumeration structure. We consider the query
  $Q''(\mathbf{x}) \colonequals \exists \mathbf{y} ~ Q(\mathbf{x}, \mathbf{y})$.
  It is clear that, for any $\Gamma$-tree $T'$, the output $Q''(T')$ of~$Q''$
  on~$T'$ consists of the tuples ${\mathbf{b}}$ such that the group
  $\mathcal{G}({\mathbf{b}})$ of~$Q$ on~$T'$ is
  non-empty. Hence, we apply
  Theorem~\ref{thm:main} to this query, as part of our linear-time
  preprocessing.

  We now enumerate the non-empty groups as follows. We first enumerate the output of
  $Q''(\lambda(T))$ in constant-delay using the second enumeration structure.
  Each produced tuple ${\mathbf{b}}$ corresponds to a non-empty group
  $\mathcal{G}({\mathbf{b}})$.
  We now modify the labeling function used in
  the first enumeration structure to~$\lambda_{\mathbf{b}}$. To do so, we must change at
  most $2m$ labels, where $m$ is the arity of~$\mathbf{x}$; as $m$ is a
  constant, this is a constant number of updates, so the complexity of doing
  this update on the first structure is in $O(\log \card{T})$. Now, the first
  structure gives us the aggregation value $\rho(\mathcal{G}({\mathbf{b}}))$, 
  and we can produce the pair $({\mathbf{b}}, \rho(\mathcal{G}({\mathbf{b}})))$
  with delay $O(\log \card{T})$. This concludes
  the description of the enumeration phase, and concludes the proof.
\end{proof}

\subparagraph*{Parameterized queries.}
We conclude by presenting another kind of practical queries that we can support
thanks to updates. A \emph{parameterized} MSO query $Q(\mathbf{x}, \mathbf{y})$
on $\Gamma$-trees has two kinds of first-order variables, like group-by:
we call $\mathbf{x}$ the \emph{parameters}. The idea is that, given a
$\Gamma$-tree $T$, the user chooses a tuple ${\mathbf{b}}$ to instantiate the
parameters~$\mathbf{x}$, and we must enumerate efficiently the results of
$Q({\mathbf{b}},
\mathbf{y})$; however the user can change their mind and modify ${\mathbf{b}}$ to change
the value of the parameters. We know by Theorem~\ref{thm:main} that we can
support these queries efficiently: after a
linear-time preprocessing of~$T$, we can enumerate the results
of~$Q({\mathbf{b}}, \mathbf{y})$ with constant delay; and we can react to
changes to~${\mathbf{b}}$ in
time $O(\log \card{T})$ by performing an update on the enumeration structure.

\section{Conclusion}
\label{sec:conclusion}
We have studied MSO queries on trees under \emph{relabeling}
updates, and shown how to enumerate their answers with linear-time
preprocessing, delay and memory linear in each valuation, and update time
logarithmic in the input tree. We have shown this by extending our circuit-based
approach~\cite{amarilli2017circuit} to hybrid circuits, and we have deduced
consequences for practical query languages, in particular for efficient
aggregation. Our results have another technical property that we have not
presented in the main text: like those of~\cite{losemann2014mso}, they are also
tractable in the size of the query when representing it as a deterministic automaton.

The main direction for future work would be to extend our result to support
insertions and deletions of leaves, like~\cite{losemann2014mso}, hopefully 
preserving our improved bounds:
while deletions can be emulated with relabelings, insertions are trickier.
Such a result was very recently shown in~\cite{niewerth2018enumeration} for the case of
words rather than trees.
We believe that many of our constructions on trees should
adapt to insertions and deletions.
The main challenge is to extend Lemma~\ref{lem:balancing}, which
we believe to be an interesting question in its own right: the technique
of~\cite{balmin2004incremental} may be applicable here, although it would lead to an
$O(\log^2 n)$ update time.

\setcounter{figure}{4}
\renewcommand{\figurename}{}
\renewcommand\thefigure{(\alph{figure})}
\captionsetup{labelformat=simple}

\bibliography{main}
\end{document}